\newtheorem{theorem}{Theorem}
\newtheorem{lemma}{Lemma}
\newtheorem{proposition}{Proposition}
\theoremstyle{definition}
\newtheorem{example}{Example}
\newtheorem{definition}{Definition}
\newtheorem{corollary}{Corollary}
\newcommand{\dist}{\operatorname{dist}}
\title{Optimal Binary Differential Privacy via Graphs}
\author{Sahel Torkamani, Javad B. Ebrahimi, Parastoo Sadeghi, Rafael G. L. D'Oliveira, and Muriel Médard
\thanks{Sahel Torkamani is with the Department of Mathematical Sciences at Sharif University of Technology, Tehran, Iran, Email: \texttt{sahel.torkamani@sharif.edu}. Javad Ebrahimi is with the Department of Mathematical Sciences at Sharif University of Technology, Tehran, Iran, Email: \texttt{javad.ebrahimi@sharif.edu}. Parastoo Sadeghi is with the School of Engineering and Technology, University of New South Wales, Canberra, Australia, Email: \texttt{p.sadeghi@unsw.edu.au}. Rafael G. L. D'oliveira is with the School of Mathematical and Statistical Sciences at Clemson University, Clemson, SC, Email: \texttt{rdolive@clemson.edu}. Muriel M\'edard is with the Research Laboratory of Electronics at the Massachusetts Institute of Technology, Cambridge, MA, Email \texttt{medard@mit.edu}. Preliminary results of this paper were partly presented in~\cite{RafnoDP2colorPaper}.}}
\begin{document}
\maketitle
\RestyleAlgo{ruled}

\begin{abstract}
We present the notion of \emph{reasonable utility} for binary mechanisms, which applies to all utility functions in the literature. This notion induces a partial ordering on the performance of all binary differentially private (DP) mechanisms. DP mechanisms that are maximal elements of this ordering are optimal DP mechanisms for every reasonable utility. By looking at differential privacy as a randomized graph coloring, we characterize these optimal DP in terms of their behavior on a certain subset of the boundary datasets we call a boundary hitting set. In the process of establishing our results, we also introduce a useful notion that generalizes DP conditions for binary-valued queries, which we coin as suitable pairs. Suitable pairs abstract away the algebraic roles of $\varepsilon,\delta$ in the DP framework, making the derivations and understanding of our proofs simpler. Additionally, the notion of a suitable pair can potentially capture privacy conditions in frameworks other than DP and may be of independent interest. 
\end{abstract}

\section{Introduction} 
Differential privacy (DP) \cite{dwork2006calibrating} has emerged as a leading standard in private data analysis \cite{survey_2017}. This framework has been instrumental in protecting privacy across a multitude of applications. Most prominently, the United States Census Bureau integrated differential privacy into its 2020 Census release \cite{USCBadopts}. Furthermore, industry leaders like Google~\cite{google}, Microsoft~\cite{microsoft}, and Apple~\cite{apple} have also incorporated DP into their respective systems. DP is also heavily studied and used in deep learning~\cite{abadi2016deep}, \cite{GANobfuscator} and federated learning \cite{advances:FL}, \cite{Wei20}.

Differential privacy is often achieved through a randomized perturbation of the true query outputs before sharing them with potentially untrustworthy entities. However, such a perturbation (also known as a DP mechanism) inevitably affects the reliability and utility of the output. Therefore, one of the central and challenging research problems in the field is how to design and implement DP mechanisms to best balance privacy and utility~\cite{Censusissues}. Under certain parameter settings and assumptions, optimal DP mechanisms have been studied in the literature for real-valued queries~\cite{natashalaplace,ghosh2012universally,staircase,quan2016optimal,geng2015optimal} and for categorical or binary-valued data \cite{holahan_LP_DP,multiparty,hamming_DP}.

Previous works on differential privacy have considered different utility functions to measure performance. Thus, while a certain DP mechanism might perform well, or even optimally, for a certain utility, it might not do so for another. In Definition~\ref{def: reasonable utility}, we present the notion of \emph{reasonable utility} for binary mechanisms, which applies to all binary utility functions in the literature. This notion induces a partial ordering on the performance of all binary DP mechanisms. DP mechanisms which are maximal elements of this ordering are optimal DP mechanisms for every reasonable utility. In Theorem~\ref{theo: uniqueness}, we characterize these optimal DP mechanisms. To do so, we look at differential privacy as a randomized graph coloring.

In our graph formulation, each vertex $v \in V$ of the graph represents a dataset and each edge represents a neighborhood relation. The true value of the binary query is represented by the vertex color (such as \texttt{blue} and \texttt{red}). A DP mechanism is then a randomized coloring of the vertices subject to local privacy constraints. We categorize datasets into boundary and non-boundary datasets. Boundary datasets are those with at least one neighbor with a different true query value (color), and non-boundary datasets are those in which no single individual in the dataset can change the query. 

Theorem~\ref{theo: uniqueness} shows that optimal DP mechanisms are characterized by the values of the DP mechanism on a certain subset of the boundary datasets we call a \emph{boundary hitting set}. Thus, if the values of a DP mechanism on a boundary hitting set are defined and satisfy DP conditions among themselves, then there exists a unique optimal DP mechanism which outperforms all others, for any reasonable utility function.

In the process of establishing our results, we also introduce a useful notion that generalizes DP conditions for binary-valued queries. We coin this as a \emph{suitable pair}, which abstracts away the algebraic roles of $\varepsilon,\delta$ in the DP framework and instead focuses on the following: a randomized binary mechanism defined on a dataset $v$ imposes an upper bound and a lower bound on the mechanism on a neighboring dataset $u$. These bounds at $u$, in turn, impose upper and lower bounds on the mechanism in the original dataset $v$. The strength of the notion of suitable pair is that non-local privacy conditions between non-neighboring datasets can be easily understood and manipulated without being entangled in algebraic DP conditions. Thus, simplifying the derivations and understanding of our proofs. Additionally, the notion of a suitable pair can potentially capture privacy conditions in frameworks other than DP and may be of independent interest.

\subsection{Main Contributions}

Our main contributions are as follows.

\begin{itemize}

    \item In Definition~\ref{def: reasonable utility}, we present the notion of \emph{reasonable utility} for binary mechanisms, which applies to all utility functions in the literature. This notion induces a partial ordering on the performance of all binary DP mechanisms. DP mechanisms which are maximal elements of this ordering are optimal DP mechanisms for every reasonable utility.

    \item In Theorem~\ref{theo: uniqueness} we characterize optimal DP mechanisms by their values on a certain subset of the boundary datasets, which we call a \emph{boundary hitting set}.

    \item In Definition~\ref{def: suitable pair} we present the notion of a \emph{suitable pair}. This notion generalizes DP conditions for binary-valued queries and abstracts away the algebraic roles of $\varepsilon,\delta$, thus simplifying our proofs. 

    \item We present Algorithm~\ref{algo: main}, for finding optimal mechanisms within the suitable pair framework, as well as a more efficient Algorithm~\ref{algo: single} for the case where one is solely interested in the output of a mechanism on a specific dataset. The optimality of Algorithm~\ref{algo: main} is stated in Theorem~\ref{theo: algo 1}. 
\end{itemize}

 Theorem~\ref{theo: uniqueness} generalizes the results in~\cite{RafnoDP2colorPaper}, which is stated as Corollaries \ref{theo: boundary homo} and~\ref{cor:balanced} for the spcial case of boundary homogenous mechanisms and balanced mechanisms, respectively. Definition~\ref{def: suitable pair}, Algorithms~\ref{algo: main} and~\ref{algo: single}, Theorem~\ref{theo: algo 1}, and associated intermediate results are all new in this paper with respect to~\cite{RafnoDP2colorPaper}.

\subsection{Paper Organization}

Section~\ref{sec:main} contains a statement of the problem and all main results of the paper. In Section~\ref{sec:setting}, we review basic DP definitions and introduce the notion of reasonable utility, mechanism utility dominance, and optimal mechanism. Section~\ref{sec: DP Graph} presents the DP mechanism as the randomized coloring of datasets on the graph. Section~\ref{sec:optimal}, highlighted in Theorem~\ref{theo: uniqueness}, presents the main results for optimally extending the  mechanism in terms of the restricted $(\varepsilon, \delta)$-DP mechanism on a boundary hitting set. Section~\ref{sec:suitable} generalizes the results of Section~\ref{sec:optimal} using the new notion of a suitable pair and presents the necessary and sufficient condition for the existence of the unique optimal extension of a mechanism restricted to a boundary hitting set. It also summarizes the optimal extension in Algorithm~\ref{algo: main}. All proofs are in Section~\ref{sec:proofs}.

\section{Main Results}\label{sec:main}

\subsection{Differential Privacy}\label{sec:setting}

We denote by $V$ the family of datasets. We consider a symmetric neighborhood relationship $\sim$ on $V$ where $u,v \in V$ are said to be neighbors if $u \sim v$. We also consider a finite output space $Q$, which corresponds to the space over which the output of the queries lies. A randomized mechanism, which we refer to as just a mechanism, is a random function $\mathcal{M}:V \to Q$, from the family of datasets to the output space.

\begin{definition}[Differential Privacy \cite{dwork2014algorithmic}]\label{def: dp} Let $\varepsilon, \delta \in \mathbb{R}$ be such that $\varepsilon \geq 0$ and $0\leq \delta<1$. Let $V$ be a set and $\sim$ be a symmetric relation on $V$. Then, a mechanism $\mathcal{M}:V \to Q$ is $(\varepsilon,\delta)$-differentially private if for any $u \sim v$ and $S \subseteq Q$, we have  $\Pr [\mathcal{M}(u) \in \mathcal{S}] \leq e^\varepsilon \Pr [\mathcal{M}(v) \in \mathcal{S}] + \delta $. We denote the set of all $(\varepsilon, \delta)$-DP mechanisms $\mathcal{M}:V \rightarrow Q$ by $\mathfrak{M}_{\varepsilon,\delta}(V, Q)$. However, when $V$ and $Q$ are clear from the context, we refer to $\mathfrak{M}_{\varepsilon,\delta}(V, Q)$ as $\mathfrak{M}_{\varepsilon,\delta}$.
\end{definition}

In this paper, we consider the case where the size of the output space is $|Q|=2$, i.e., binary-valued queries. Without loss of generality, we set the output space to $Q=\{\texttt{blue},\texttt{red}\}$. The DP conditions for any $u\sim v$ in $V$ are then as follows.
\begin{align} \label{eq: p1}
	\Pr[\mathcal{M}(u) = \texttt{blue}] &\leq e^\varepsilon  \Pr[\mathcal{M}(v) = \texttt{blue}]+ \delta, \\
	1-\Pr[\mathcal{M}(u) = \texttt{blue}] &\leq e^\varepsilon (1-\Pr[\mathcal{M}(v) = \texttt{blue}]) + \delta,\label{eq: p2}\\
\Pr[\mathcal{M}(v) = \texttt{blue}] &\leq e^\varepsilon  \Pr[\mathcal{M}(u) = \texttt{blue}]+ \delta, \label{eq: p3}\\
	1-\Pr[\mathcal{M}(v) = \texttt{blue}] &\leq e^\varepsilon (1-\Pr[\mathcal{M}(u) = \texttt{blue}]) + \delta.\label{eq: p4}
	\end{align}
Since we only consider binary-valued queries, we have that $\Pr[\mathcal{M}(u) = \texttt{red}] = 1-\Pr[\mathcal{M}(u) = \texttt{blue}]$ and that $\Pr[\mathcal{M}(v) = \texttt{red}]= 1- \Pr[\mathcal{M}(v) = \texttt{blue}]$.

We consider a function $T: V \to Q$, which we refer to as the true function. Our goal is to approximate the true function $T$ using an $(\varepsilon,\delta)$-differentially private mechanism $\mathcal{M}$. To measure the performance of the mechanism, i.e., how good the approximation is, a utility function $\mathcal{U}_T:\mathfrak{M}_{\varepsilon,\delta} \rightarrow \mathbb{R}$ must be defined, where $\mathcal{U}_T[\mathcal M] \geq \mathcal{U}_T[\mathcal M']$ means that the mechanism $\mathcal M$ outperforms $\mathcal M'$ with respect to the true function $T$. In this work, we do not consider a specific utility function, but rather consider a general family of them.

\begin{definition} \label{def: reasonable utility}
A utility function $\mathcal{U}:\mathfrak{M}_{\varepsilon,\delta} \rightarrow \mathbb{R}$ is \textit{reasonable} if $\Pr[\mathcal{M}(u)=T(u)] \geq \Pr[\mathcal{M'}(u)=T(u)]$ for every $u \in V$ implies $\mathcal{U}[\mathcal{M}] \geq \mathcal{U}[\mathcal{M}']$. When this condition holds, we say that the mechanism $\mathcal{M}$ dominates $\mathcal{M}'$.
\end{definition}

Given the true function $T$, the notion of domination in Definition~\ref{def: reasonable utility} induces a partial order on the set $\mathfrak{M}_{\varepsilon,\delta}$ of all $(\varepsilon,\delta)$-DP mechanisms. If a mechanism $\mathcal M$ dominates another mechansim $\mathcal M'$ then the first one outperforms the second for every reasonable utility function. It is not always the case that two mechanisms can be compared, even when restricted to a reasonable utility. We give an example below.

\begin{example}
Consider the dataset $V=\{1,2\}$ where $1 \sim 2$ and the true function $T:V \rightarrow Q$ is such that $T(1) = \texttt{blue}$ and $T(2)=\texttt{red}$. Let $\mathcal{M}_1$ and $\mathcal{M}_2$ be two $(\log(2),0.1)$-DP mechanisms\footnote{In this paper, by $\log$ we mean the natural logarithm.} defined such that $\Pr[\mathcal{M}_1(1)=\texttt{blue}] = 0.58$, $\Pr[\mathcal{M}_1(2)=\texttt{red}] = 0.76$, $\Pr[\mathcal{M}_2(1)=\texttt{blue}] = 0.64$, and $\Pr[\mathcal{M}_2(2)=\texttt{red}] = 0.73$. Then, neither mechanism dominates the other. The reason for this is that there are reasonable utility functions which, for a mechanism $\mathcal{M} \in \mathfrak{M}_{\varepsilon,\delta}$ might prefer a higher value for $\Pr[\mathcal{M}(1)=\texttt{blue}]$ more than 
a higher value for $\Pr[\mathcal{M}(2)=\texttt{red}]$, or vice-versa. Extreme cases of this are the reasonable utility functions $U[\mathcal{M}] = \Pr[\mathcal{M}(1)=\texttt{blue}]$ and $U'[\mathcal{M}] = \Pr[\mathcal{M}(2)=\texttt{red}]$, both disagreeing on which of $\mathcal{M}_1$ or $\mathcal{M}_2$ is better.
\end{example}

For more discussion and insight on the notion of reasonable utility and its extensions, see Sections~\ref{sec:discussion1} and~\ref{sec:discussion2}. 

We are interested in characterizing the optimal $(\varepsilon, \delta)$-DP mechanisms, i.e., the $(\varepsilon, \delta)$-DP mechanisms $\mathcal{M}$ which are not dominated by any other mechanism. These correspond to the maximal elements in the partial order $\mathfrak{M}_{\varepsilon,\delta}$. To find such mechanisms, we reinterpret the problem as a randomized graph coloring problem, which we describe in Section \ref{sec: DP Graph}.

This random graph coloring approach, together with the notion of suitable pairs defined in Section~\ref{sec:suitable}, allows us to abstract the problem of characterizing optimal $(\varepsilon, \delta)$-DP mechanisms. Through such abstraction, we show in Theorem~\ref{theo: uniqueness} that if the mechanism is defined only on an appropriate subset of neighboring datasets, between which the true function changes value, then the optimal mechanism can be uniquely found for every other dataset.

\subsection{Differential Privacy as Randomized Graph Colorings} \label{sec: DP Graph}

We interpret differential privacy as a randomized graph coloring problem. The vertices of the graph\footnote{We assume the graph is undirected and connected. Otherwise, the results of this paper apply to any connected component of $\mathcal{G}(V,E)$.} $\mathcal{G}(V,E)$ are the datasets $u \in V$ and the edges $E$ are the neighboring relation on the datasets, i.e. two vertices $u,v \in V$ have an edge between them if $u \sim v$. The true function $T: V \rightarrow Q$ is a graph coloring ($Q$ is the set of colors) of the vertices of $\mathcal{G}$. For a given color $j \in Q$, the inverse image $T^{-1}(j)$ is the set of vertices with true value $j$. Therefore, for a given vertex $u\in V$, the inverse image $T^{-1}(T(u))$ is the set of vertices with the same true value as $u$.  Since differential privacy is a local condition, i.e., it is a condition on the $u,v \in V$ such that $u \sim v$, the notion of a neighborhood is essential.

\begin{figure}
\begin{center}
\begin{tikzpicture}[scale=1]

\draw[color=black, thick] (1,1) -- (2,1) -- (2,2) -- cycle;
\draw[color=black, thick] (2,2) -- (3,1) -- (4,2);
\draw[color=green, thick] (4,2) -- (4,3);
\draw[color=black, thick] (4,3) -- (3,4);
\draw[color=green, thick] (3,4) -- (2,3);

\draw[color=black, thick] (2,3) -- (1,3);
\draw[color=black, thick] (3,4) -- (3,3);
\draw[color=black, thick] (4,3) -- (5,3) -- (6,2) -- (7,1) -- (6,1) -- (6,2) -- (5,1) -- (5,2);
\draw[color=green, thick] (5,2) -- (4,2);

\draw[color=black, thick] (5,3) -- (6,3);
\draw[color=black, thick] (5,3) -- (6,4);
\draw[color=green, thick] (6,4) -- (7,4);
\draw[color=green, thick] (6,4) -- (7,3);
\draw[color=black, thick] (7,2) -- (8,2);
\draw[color=green, thick] (7,1) -- (7,2);

\draw[blue,fill=blue] (1,1) circle (.5ex);
\node at (1,0.7) {{\footnotesize $a$}};
\draw[blue,fill=blue] (2,1) circle (.5ex);
\node at (2,0.7) {{\footnotesize $b$}};
\draw[blue,fill=blue] (3,1) circle (.5ex);
\node at (3,0.7) {{\footnotesize $c$}};
\draw[blue,fill=blue] (2,2) circle (.5ex);
\node at (2.35,2) {{\footnotesize $g$}};
\draw[blue,fill=blue] (4,2) circle (.5ex);
\node at (4.04,1.7) {{\footnotesize $h$}};
\draw[blue,fill=blue] (1,3) circle (.5ex);
\node at (1,3.3) {{\footnotesize $k$}};
\draw[blue,fill=blue] (2,3) circle (.5ex);
\node at (2,3.3) {{\footnotesize $\ell$}};

\draw[red,fill=red] (5,1) circle (.5ex);
\node at (5,0.7) {{\footnotesize $d$}};
\draw[red,fill=red] (6,1) circle (.5ex);
\node at (6,0.7) {{\footnotesize $e$}};
\draw[red,fill=red] (7,1) circle (.5ex);
\node at (7,0.7) {{\footnotesize $f$}};
\draw[red,fill=red] (5,2) circle (.5ex);
\node at (5.25,2) {{\footnotesize $i$}};
\draw[red,fill=red] (6,2) circle (.5ex);
\node at (6.4,2) {{\footnotesize $j$}};
\draw[red,fill=red] (3,3) circle (.5ex);
\node at (3.1,2.7) {{\footnotesize $m$}};
\draw[red,fill=red] (4,3) circle (.5ex);
\node at (4.1,3.3) {{\footnotesize $n$}};
\draw[red,fill=red] (5,3) circle (.5ex);
\node at (4.9,3.3) {{\footnotesize $o$}};
\draw[red,fill=red] (6,3) circle (.5ex);
\node at (6.3,3) {{\footnotesize $p$}};
\draw[red,fill=red] (3,4) circle (.5ex);
\node at (3.3,4.1) {{\footnotesize $q$}};
\draw[red,fill=red] (6,4) circle (.5ex);
\node at (6,4.3) {{\footnotesize $r$}};

\draw[blue,fill=blue] (7,4) circle (.5ex);
\node at (7,4.3) {{\footnotesize $s$}};
\draw[blue,fill=blue] (7,3) circle (.5ex);
\node at (7.3,3) {{\footnotesize $t$}};

\draw[blue,fill=blue] (7,2) circle (.5ex);
\node at (7,2.2) {{\footnotesize $u$}};
\draw[blue,fill=blue] (8,2) circle (.5ex);
\node at (8,2.2) {{\footnotesize $v$}};

\end{tikzpicture}
\end{center}

\caption{In this graph, the vertices are the datasets and the edges are the neighborhood relationships, e.g., since there exists an edge between $a$ and $b$, it follows that $a \sim b$. The true function is a graph coloring of the vertices, e.g., $T(a)=\texttt{blue}$ and $T(d)=\texttt{red}$. The set of vertices with true value $\texttt{blue}$ is $T^{-1}(\texttt{blue})=\{a,b,g,h,k,\ell,n,s,t,u,v\}$ and the ones with true value \texttt{red} is $T^{-1}(\texttt{red})= \{d,e,f,i,j,m,n,o,p,q,r\}$. The neighbohood of the vertex set $\{g,h\}$ is $N(\{g,h\}) = \{a,b,c,i,n\}$. The distance between $a$ and $b$ is $\dist(a,d) = 5$, which is the size of the shortest path $(a,g),(g,c),(c,h),(h,i),(i,d)$ between them. The edges colored in green are the boundary edges $\partial _T (\mathcal{G}, E) = \{ (\ell , q), (h,n), (h,i), (r,s), (r,t), (f,u) \}$. The boundary vertices are the vertices $\partial _T (\mathcal{G}, V) = \{ \ell,q,h,n,i,r,s,t,f,u \}$. The sets $\mathcal{H}_T = \{ l,h,u,s,t\}$ and $\mathcal{H}'_T = \{a, q,n,i,s,t,u,v\}$ are boundary-hitting sets, because they contain at least one vertex from each boundary edge, while the set $\{ a, \ell , n,s,t,j,u\}$ is not a boundary-hitting set since it does not include any vertex from the boundary edge $(h,i)$. Finally, $\partial_T(\mathcal{G},\texttt{blue}) = \{h,\ell,s,t,u\}$ and $\partial_T(\mathcal{G},\texttt{red}) = \{f,i,n,q,r\}$. }\label{fig: randomGraphColoring}
\end{figure}
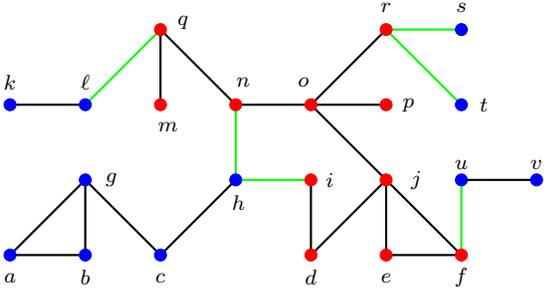

\begin{definition}[Neighborhood]
The neighborhood of a subset $S \subseteq V$ of vertices, denoted by $N(S)$, is the set of all vertices in $V - S$ which are neighbors to at least one element of $S$.
\end{definition}

\begin{definition}\label{def:path}
 A sequence of vertices $u = u_0, u_1, \cdots, u_n = v$ is said to form a path from $u$ to $v$, denoted by $(u,v)$-path, if $(u_0, u_1), \cdots, (u_{n-1},u_n) \in E$. We say $n$ is the path length. The distance between two nodes $u,v$, denoted by $\dist(u,v)$, is the shortest path length from $u$ to $v$. The distance between two subsets $A_1, A_2 \subset V$ is the shortest path length between any $a_1\in A_1$ and any $a_2 \in A_2$.
\end{definition}

\begin{definition}[Boundary Edges]\label{def: boundary}
The \emph{boundary edge} of $\mathcal G$ with respect to $T$, denoted by $\partial _T (\mathcal{G}, E)$, is the set of the edges in $\mathcal G$ whose two endpoints have different true query values.
\end{definition}

\begin{definition}[Boundary Vertices]
    The \emph{boundary vertices} of $\mathcal G$ with respect to $T$, denoted by $\partial _T (\mathcal{G}, V)$, is the set of all the endpoints of the boundary edges.    
\end{definition}

\begin{definition}[Boundary-hitting Set]\label{def: boundary-hitting set}
A \emph{boundary-hitting set} of $\mathcal G$ with respect to $T$, denoted by $\mathcal{H}_T$, is a subset of the vertices which contains at least one endpoint of every edge in $\partial _T (\mathcal{G}, E)$.
\end{definition}

\begin{definition}[Boundary Vertices with True Value $j$]
     The boundary vertices with true value $j$ is the set of boundary vertices whose true value is $j$. We denote this set by $\partial_T(\mathcal{G},j) := \partial_T(\mathcal{G},V) \cap T^{-1}(j)$.   
\end{definition}

\noindent In Fig.~\ref{fig: randomGraphColoring} we illustrate these definitions.

In an $(\varepsilon,\delta)$-differentially private mechanism, every path connecting two vertices $u,v \in V$ induces an upper bound on the probability of the mechanism output, i.e., the probability $\Pr[\mathcal{M}(v)=j]$ induces an upper bound on $\Pr[\mathcal{M}(u)=j]$ and vice versa. These upper bounds are induced by the $(\varepsilon, \delta)$-DP conditions (inequalities \eqref{eq: p1} through  \eqref{eq: p4}). 
Since each upper bound that $v$ induces on $u$ is an increasing function of $\alpha := \Pr[\mathcal{M}(v)=j]$ and of the length of the path between $u$ and $v$, then the shortest path induces the tightest upper bound on $\Pr[\mathcal{M}(u)=j]$. We formalize this statement through Definition \ref{def:UB} and Proposition \ref{prop: single}.

\begin{definition}\label{def:UB}
Let $u, v$ be distinct vertices in $V$ with distance $d = \dist(u,v)$ between them and $\alpha \in [0, 1]$ be some fixed value. Then, the probability induced on the vertex $u$ by the vertex $v$ with value $\alpha$ is:
\begin{align}\label{eq:UB}
    p(d,\alpha) = \begin{cases}
    e^{d \varepsilon } \alpha +
    \delta \frac{ e^{d \varepsilon }-1}{e^{\varepsilon}-1}, &    d \leq \tau,  \\ 
    \min \big(1, e^{(2\tau - d)\varepsilon} \alpha + 1 - \frac{1}{e^{(d - \tau)\varepsilon}}
    &  \tau < d. \\
       \quad + \frac{\delta (
   e^{\tau \varepsilon}+e^{(d - \tau) \varepsilon}-2)}{e^{(d - \tau) \varepsilon}(e^{\varepsilon}-1)}\big) ,
    \end{cases}
\end{align}
where 
\begin{align} \label{eq:tau0}
 \tau = \left \lceil \frac{1}{\varepsilon} \log{\frac{(e^\varepsilon + 2\delta -1) }{(e^\varepsilon +1)(e^{\varepsilon}\alpha - \alpha + \delta )}} \right \rceil.
 \end{align}
\end{definition}

Taking the minimum in the second line of \eqref{eq:UB} ensures that $p(d,\alpha)$ never exceeds $1$. The next proposition establishes how each vertex $v$ induces an upper bound on $\Pr[\mathcal{M}(u)=j]$ for any other vertex $u$.

\begin{proposition}\label{prop: single}
Let $\mathcal{M}:V \rightarrow Q$ be a mechanism. Then, $\mathcal{M}$ is an $(\varepsilon,\delta)$-DP mechanism if and only if for every $j \in \{\texttt{blue},\texttt{red}\}$ and every distinct vertices $u, v \in V$ with $d = \dist(u,v)$ and $\Pr[\mathcal{M}(u)=j] = \alpha$, it holds that $\Pr[\mathcal{M}(v)=j] \leq p(d,\alpha)$.
\end{proposition}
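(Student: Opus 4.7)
The plan is to prove the two directions by reading the $(\varepsilon,\delta)$-DP conditions as per-edge affine bounds and iterating them along a shortest path in the neighborhood graph.

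For the $(\Leftarrow)$ direction, I specialize the hypothesis to $d=1$. A short case analysis of \eqref{eq:UB} shows $p(1,\alpha) \leq e^\varepsilon \alpha + \delta$ in every regime of $\tau$ (the first branch gives equality, and in the second branch either the $\min$ with $1$ is attained, which is fine because $e^\varepsilon\alpha+\delta\geq 1$ there, or the complementary formula is tighter than $e^\varepsilon \alpha + \delta$). Applying the assumed inequality with $j=\texttt{blue}$ and then $j=\texttt{red}$, and swapping the roles of the two endpoints, recovers all four DP inequalities \eqref{eq: p1}--\eqref{eq: p4}.

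For the $(\Rightarrow)$ direction, fix distinct $u,v \in V$ and a shortest path $u = u_0 \sim u_1 \sim \cdots \sim u_d = v$, and write $\beta_i := \Pr[\mathcal{M}(u_i) = j]$. DP supplies two per-edge upper bounds on $\beta_{i+1}$: the direct bound $\beta_{i+1} \leq e^\varepsilon \beta_i + \delta$ from \eqref{eq: p1} or \eqref{eq: p3}, and the complementary bound $\beta_{i+1} \leq 1 - e^{-\varepsilon}(1-\beta_i-\delta)$, obtained by applying \eqref{eq: p2} or \eqref{eq: p4} to the complement color and solving for $\beta_{i+1}$. My strategy is to apply the direct bound for the first $\min(d,\tau)$ steps and, when $d > \tau$, the complementary bound for the remaining $d-\tau$ steps.

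Iterating the direct bound $\tau$ times from $\beta_0 = \alpha$ yields the closed form $\beta_\tau \leq e^{\tau\varepsilon}\alpha + \delta(e^{\tau\varepsilon}-1)/(e^\varepsilon-1)$, which is exactly the first branch of \eqref{eq:UB} when $d \leq \tau$. For $d > \tau$, the complementary map $f(x) = 1 - e^{-\varepsilon}(1-x-\delta)$ is affine with contraction factor $e^{-\varepsilon}$ and fixed point $c = 1 + \delta/(e^\varepsilon-1)$, so $f^{(k)}(x) = c + e^{-k\varepsilon}(x-c)$. Plugging the bound on $\beta_\tau$ into $f^{(d-\tau)}$ and simplifying should, after routine algebra, reproduce the second branch of \eqref{eq:UB} verbatim; the outer $\min(1,\cdot)$ is automatic since probabilities are at most $1$. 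The main technical step is this algebraic identification, together with a consistency check that the $\tau$ defined in \eqref{eq:tau0} is the smallest index at which the direct bound first drives the running upper bound past the threshold $\alpha^\star = (1-\delta)/(e^\varepsilon+1)$ where the complementary bound becomes tighter than the direct one; this optimality of $\tau$ justifies why this specific $p(d,\alpha)$ was chosen, although any switching time would already yield a valid (possibly looser) upper bound and thus suffice for the inequality claimed in the proposition.
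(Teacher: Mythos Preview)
Your approach is correct, and its computational core coincides with the paper's: iterate the two affine per-edge upper bounds along a shortest path, switching from the direct map $x \mapsto e^\varepsilon x + \delta$ to the complementary map $x \mapsto (x + e^\varepsilon + \delta - 1)/e^\varepsilon$ at step $\tau$, and identify the resulting closed form with \eqref{eq:UB}. The difference is purely in packaging. The paper first shows that $(L_{\texttt{DP}}, U_{\texttt{DP}})$ form a suitable pair (Proposition~\ref{prop: uniqueness}), invokes the general path-induction Lemma~\ref{lem: generalisedLU} to reduce the statement to $\beta \in [L_{\texttt{DP}}^d(\alpha), U_{\texttt{DP}}^d(\alpha)]$, and then carries out exactly your iteration to compute $U_{\texttt{DP}}^d(\alpha)$, including the monotonicity argument that the two regimes do not toggle once the threshold $(1-\delta)/(e^\varepsilon+1)$ is crossed; it then disposes of the lower-bound half via Lemma~\ref{lem: cond3Gen} and Lemma~\ref{lem:UBsufficient}, which you handle equivalently by quantifying over both colors. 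Your direct route is shorter and self-contained for this proposition; the paper's detour through the suitable-pair abstraction pays off elsewhere, since the same machinery drives Lemma~\ref{lem: cons} and Theorem~\ref{theo: algo 1}.
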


Proposition~\ref{prop: single} presents a closed form expression for the differential privacy condition on non-neighboring datasets, which we use, in Theorem~\ref{theo: uniqueness}, to characterize optimal mechanisms.

\subsection{Characterizing Optimal Mechanisms}\label{sec:optimal}

We show that optimal mechanisms are uniquely characterized by their behavior on the boundary edges, i.e., edges connecting vertices with different true values (see Definition~\ref{def: boundary}). We do this by showing that when a mechanism has been predefined on a boundary-hitting set (see Definition~\ref{def: boundary-hitting set}), then it can be uniquely extended to an optimal mechanism over all other vertices. These results are shown in Theorem~\ref{theo: uniqueness} and Algorithm~\ref{algo: main}. To this end, we introduce the notions of mechanism restriction and extension.

\begin{definition}[Mechanism Restriction] The restriction of a mechanism $\mathcal{M} : \mathcal{V} \rightarrow Q$ to a subset $A\subseteq V$ is  $\mathcal{M}|_A:A\rightarrow Q$.
\end{definition}
\noindent We also refer to $\mathcal{M}|_A$ as a partial mechanism.

\begin{definition}[Mechanism Extension]\label{def:defintion}
Let $A \subseteq V$ and $\mathcal{M}':A \rightarrow Q$ be a mechanism. Then, a mechanism $\mathcal{M}:V \rightarrow Q$ is an extension of $\mathcal{M}'$ if $\mathcal{M}|_A = \mathcal{M}'$.
\end{definition}

\begin{theorem} \label{theo: uniqueness}
Let $\mathcal{H}_T$ be a boundary-hitting set with respect to the true function $T$ and $\mathcal{M}': \mathcal{H}_T \rightarrow Q$  be a randomized function. Then, there exists a unique optimal $(\varepsilon,\delta)$-DP mechanism $\mathcal{M}:V \rightarrow Q$ such that $\mathcal{M}|_{\mathcal{H}_T} = \mathcal{M}'$ if and only if for every $u,v \in \mathcal{H}_T$ with $d = \dist(u,v)$, we have $\Pr[\mathcal{M}'(u) = j] \leq p(d,\Pr[\mathcal{M}'(v)=j])$ for a fixed $j \in Q$. Moreover, for every $u \notin \mathcal{H}_T$, the optimal mechanism is \begin{align}\label{eq: unique1}   \Pr[\mathcal{M}(u) = T(u)] = \min \limits_{v \in \mathcal{H}_T} p(\dist(u,v),\Pr[\mathcal{M}(v) = T(u)]).
\end{align}
\end{theorem}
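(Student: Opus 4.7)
The plan is to split the proof into the straightforward necessary direction and the more substantial sufficient direction. For necessity, if $\mathcal{M}$ is any $(\varepsilon,\delta)$-DP extension of $\mathcal{M}'$, then Proposition~\ref{prop: single} applied to each pair $u,v \in \mathcal{H}_T$ immediately yields $\Pr[\mathcal{M}'(u)=j] \leq p(\dist(u,v),\Pr[\mathcal{M}'(v)=j])$ for every $j \in Q$, which is exactly the claimed condition.

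For sufficiency, I would define $\mathcal{M}(u)$ for $u \notin \mathcal{H}_T$ by the formula \eqref{eq: unique1} and verify that it is $(\varepsilon,\delta)$-DP (the main task) and that it dominates every other DP extension. The domination step is short: by Proposition~\ref{prop: single}, any other $(\varepsilon,\delta)$-DP extension $\mathcal{M}^\ast$ satisfies $\Pr[\mathcal{M}^\ast(u)=T(u)] \leq p(\dist(u,v),\Pr[\mathcal{M}^\ast(v)=T(u)]) = p(\dist(u,v),\Pr[\mathcal{M}(v)=T(u)])$ for every $v \in \mathcal{H}_T$, since $\mathcal{M}^\ast$ and $\mathcal{M}$ agree with $\mathcal{M}'$ on $\mathcal{H}_T$; taking the minimum over $v$ recovers the right-hand side of \eqref{eq: unique1}. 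Uniqueness of the optimum then follows by antisymmetry of the dominance partial order: any other optimum could be neither strictly dominated by nor strictly dominating $\mathcal{M}$, forcing equality at every vertex.

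The main obstacle is showing that the constructed $\mathcal{M}$ is itself $(\varepsilon,\delta)$-DP. Using Proposition~\ref{prop: single} in the converse direction, it suffices to establish $\Pr[\mathcal{M}(u)=j] \leq p(\dist(u,w),\Pr[\mathcal{M}(w)=j])$ for every pair $u,w \in V$ and every $j \in Q$. I expect the key tools to be a composition-type inequality $p(d_1+d_2,\alpha) \leq p(d_1,p(d_2,\alpha))$, obtained by iterating the algebraic $(\varepsilon,\delta)$-DP constraints that define $p$ in Definition~\ref{def:UB}, together with monotonicity of $p$ in the distance argument. For $j=T(u)$, picking a minimizer $v^\ast \in \mathcal{H}_T$ for $w$ in \eqref{eq: unique1} and invoking the triangle inequality $\dist(u,v^\ast) \leq \dist(u,w)+\dist(w,v^\ast)$ then yields $p(\dist(u,w),\Pr[\mathcal{M}(w)=j]) \geq p(\dist(u,v^\ast),\Pr[\mathcal{M}(v^\ast)=j]) \geq \Pr[\mathcal{M}(u)=j]$, where the last step is by definition of $\mathcal{M}(u)$ as a minimum over $\mathcal{H}_T$.

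The real subtlety is the opposite color case $j \neq T(u)$, because \eqref{eq: unique1} only explicitly specifies $\Pr[\mathcal{M}(u)=T(u)]$ and the opposite color probability is determined only by complementation. This is where the hitting set hypothesis is essential: for a non-boundary edge $(u,w)$, both endpoints share the same true color and the composition argument above applies symmetrically in $j$; for a boundary edge $(u,w)$, the hitting property forces at least one endpoint, say $w$, to lie in $\mathcal{H}_T$, so the $v=w$ term in \eqref{eq: unique1} already enforces the local $(\varepsilon,\delta)$-DP inequalities at distance one in both color directions. Closing the proof will require a careful case analysis separating non-boundary from boundary edges and tracking how the hypothesis on $\mathcal{H}_T$ propagates via the minimum in \eqref{eq: unique1}; this is the step where I anticipate the majority of the technical work.
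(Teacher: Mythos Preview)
Your outline is correct and follows the same skeleton as the paper: construct $\mathcal{M}$ by the minimum formula~\eqref{eq: unique1}, show domination via Proposition~\ref{prop: single}, and verify the DP conditions by a case split on which endpoints lie in $\mathcal{H}_T$, using the boundary-hitting hypothesis precisely to force $T(u)=T(w)$ whenever neither endpoint is in $\mathcal{H}_T$.

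The main difference is one of packaging. The paper does not work directly with $p(d,\alpha)$; instead it introduces the abstract notion of a \emph{suitable pair} $(L,U)$ (Definition~\ref{def: suitable pair}), proves the existence/optimality statement once and for all in that generality (Lemma~\ref{lem: cons} and Theorem~\ref{theo: algo 1}), and then specializes by showing $(L_{\texttt{DP}},U_{\texttt{DP}})$ is a suitable pair and $U_{\texttt{DP}}^d=p(d,\cdot)$. In that framework your ``composition-type inequality'' is literally the identity $U^{d_1+d_2}=U^{d_1}\circ U^{d_2}$, and your ``symmetric in $j$'' step is isolated as the axiom $U(\alpha)\leq 1-L(1-\alpha)$ together with Lemma~\ref{lem:UBsufficient}, which shows that verifying $\alpha\leq U^d(\beta)$ and $\beta\leq U^d(\alpha)$ for one color automatically yields the inequalities for the other. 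Your direct route would have to reprove these algebraic facts for $p$ specifically, which is doable but messier given the piecewise definition in~\eqref{eq:UB}; the suitable-pair abstraction buys exactly that cleanliness, and also makes the neighbor-level verification (rather than all pairs) the natural thing to check. Conversely, your approach avoids introducing new definitions and stays closer to the concrete DP setting, which may be preferable if one only cares about Theorem~\ref{theo: uniqueness} itself.
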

Theorem~\ref{theo: uniqueness} establishes that optimal mechanisms are uniquely characterized by their values at the boundary. Moreover, the assumption that $\mathcal{H}_T$ is a boundary-hitting set is essential to the theorem, as the following example shows.

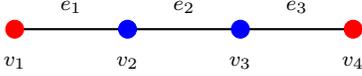
\begin{figure}[t]
\begin{center}
  \begin{tikzpicture}[scale=1.5]
\draw[color=black, thick] (10,2.5) -- (13,2.5);
\node at (10.5,2.7) {{\footnotesize $e_1$}};
\node at (11.5,2.7) {{\footnotesize $e_2$}};
\node at (12.5,2.7) {{\footnotesize $e_3$}};

\draw[red,fill=red] (10,2.5) circle (.5ex);
\node at (10,2.2) {{\footnotesize $v_1$}};
\draw[blue,fill=blue] (11,2.5) circle (.5ex);
\node at (11,2.2) {{\footnotesize $v_2$}};
\draw[blue,fill=blue] (12,2.5) circle (.5ex);
\node at (12,2.2) {{\footnotesize $v_3$}};
\draw[red,fill=red] (13,2.5) circle (.5ex);
\node at (13,2.2) {{\footnotesize $v_4$}};

\end{tikzpicture}  
\end{center}
\caption{
The graph $\mathcal{G}$ in Examples~\ref{ex: counter} and~\ref{ex: algorithm}. In Example~\ref{ex: counter}, we show that if the set on which the restricted mechanism is defined is not a boundary hitting set, then there might not be a unique optimal extension for the setting of Theorem~\ref{theo: uniqueness}. In Example ~\ref{ex: algorithm}, we show how Algorithm~\ref{algo: main} works when the restricted mechanism is properly defined on the boundary hitting set $\mathcal{H}_T = \{v_1, v_4\}$.}\label{fig: counter}
\end{figure} 

\begin{example}\label{ex: counter}

Let $\mathcal{G}(V,E)$ be the graph with vertices $V = \{v_1, \ldots, v_4\}$ and edges $E=\{(v_1,v_2),(v_2,v_3),(v_3,v_4)\}$, the true function be such that $T(v_1)=T(v_4) = \texttt{red}$ and $T(v_2)=T(v_3) = \texttt{blue}$, and $A=\{v_3\}$. We illustrate this in Fig.~\ref{fig: counter}.

Note that $A$  is not a boundary-hitting set, since it is missing a vertex from $(v_1,v_2)$. We now show a $(\log(2),0)$-DP mechanism $\mathcal{M}': A \rightarrow Q$ which does not have a unique optimal $(\log(2),0)$-DP extension to all of $V$.

Let $\mathcal{M}'$ be such that $\Pr[\mathcal{M}'(v_3)=\texttt{blue}] = \frac{1}{2}$. Then, the $(\log(2),0)$-DP mechanisms $\mathcal{M}_1: V \rightarrow Q$ such that 
\begin{align*}
    \Pr[\mathcal{M}_1(v_1) &= \texttt{blue}]= \frac{1}{2}, \\[5pt]
    \Pr[\mathcal{M}_1(v_2) &= \texttt{blue}]= \frac{3}{4}, \\[5pt]
    \Pr[\mathcal{M}_1(v_3) &= \texttt{blue}]= \frac{1}{2}, \\[5pt]
    \Pr[\mathcal{M}_1(v_4) &= \texttt{blue}]= \frac{1}{4},
\end{align*}
and $\mathcal{M}_2: V \rightarrow Q$ such that
\begin{align*}
    \Pr[\mathcal{M}_1(v_1) &= \texttt{blue}]= \frac{1}{8}, \\[5pt]
    \Pr[\mathcal{M}_1(v_2) &= \texttt{blue}]= \frac{1}{4}, \\[5pt]
    \Pr[\mathcal{M}_1(v_3) &= \texttt{blue}]= \frac{1}{2}, \\[5pt]
    \Pr[\mathcal{M}_1(v_4) &= \texttt{blue}]= \frac{1}{4},
\end{align*}
are extensions of $\mathcal{M}'$ which are not comparable. Indeed, each is a maximal element in the partially ordered set of $(\log(2),0)$-DP mechanisms on $\mathcal{G}$.
\end{example}

Theorem \ref{theo: uniqueness} generalizes the main results of \cite{RafnoDP2colorPaper}, which we restate as corollaries below.

\begin{corollary}\label{theo: boundary homo}
Let $\alpha_\texttt{blue},\alpha_\texttt{red} \in [0,1]$ be fixed real numbers. Suppose there exists an optimal $(\varepsilon,\delta)$-DP mechanism $\mathcal{M}:V \rightarrow Q$ satisfying $\Pr[\mathcal{M}(v)=T(v)] = \alpha_{T(v)}$, for every boundary vertex $v \in \partial _T (\mathcal{G}, V)$. Then for every vertex $u \notin \partial _T (\mathcal{G}, V)$, the optimal mechanism must satisfy $\Pr[\mathcal{M}(u) = T(u)] = p(\dist(u,w),\alpha_{T(w)})$, where $w \in \partial_T(\mathcal{G},T(u))$ is the closest boundary vertex to $u$.\footnote{Due to a different labelling of vertices in \cite{RafnoDP2colorPaper}, $\tau$ in \eqref{eq:tau0} is larger than the corresponding $\tau$ in \cite[Definition 8]{RafnoDP2colorPaper} by one. After appropriate transformations, they both result in the same expression for $\Pr[\mathcal{M}(u)=T(u)]$.}
\end{corollary}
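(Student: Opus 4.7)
The plan is to derive the corollary as a direct specialization of Theorem~\ref{theo: uniqueness} with the boundary-hitting set taken to be the full set of boundary vertices $\mathcal{H}_T=\partial_T(\mathcal{G},V)$. This set trivially contains an endpoint of every boundary edge, so it is a valid boundary-hitting set, and because $\mathcal{M}$ is already an $(\varepsilon,\delta)$-DP mechanism on $V$, Proposition~\ref{prop: single} ensures that the restriction $\mathcal{M}|_{\partial_T(\mathcal{G},V)}$ satisfies the compatibility hypothesis of the theorem. Theorem~\ref{theo: uniqueness} then applies and yields, for every $u\notin\partial_T(\mathcal{G},V)$,
\begin{equation*}
\Pr[\mathcal{M}(u)=T(u)]=\min_{v\in\partial_T(\mathcal{G},V)} p\bigl(\dist(u,v),\,\Pr[\mathcal{M}(v)=T(u)]\bigr).
\end{equation*}

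Next I would split this minimum according to whether $T(v)=T(u)$ or not. The boundary-homogeneity hypothesis gives $\Pr[\mathcal{M}(v)=T(u)]=\alpha_{T(u)}$ when $T(v)=T(u)$, and $\Pr[\mathcal{M}(v)=T(u)]=1-\alpha_{T(v)}$ otherwise. Among same-color boundary vertices $v\in\partial_T(\mathcal{G},T(u))$ the contribution is $p(\dist(u,v),\alpha_{T(u)})$, which is non-decreasing in $\dist(u,v)$; hence this subfamily is minimized at the closest same-color boundary vertex $w$, producing the candidate value $p(\dist(u,w),\alpha_{T(u)})$.

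The main step, and the one I expect to be the principal obstacle, is showing that opposite-color boundary vertices cannot yield a strictly smaller bound, i.e., that for every $v'\in\partial_T(\mathcal{G},V)$ with $T(v')\neq T(u)$,
\begin{equation*}
p\bigl(\dist(u,v'),\,1-\alpha_{T(v')}\bigr)\;\ge\;p\bigl(\dist(u,w),\,\alpha_{T(u)}\bigr).
\end{equation*}
For this I would fix a shortest $(u,v')$-path and let $\tilde v$ be the predecessor of $v'$ along that path; any such path must contain at least one boundary edge (since it starts at color $T(u)$ and ends at color $T(v')$), so by taking the last color transition one may assume $T(\tilde v)=T(u)$, giving $\tilde v\in\partial_T(\mathcal{G},T(u))$ with $\dist(u,\tilde v)=\dist(u,v')-1$ and thus $\dist(u,w)\le\dist(u,\tilde v)$. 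The DP constraint at the boundary edge $(\tilde v,v')$ gives, via Proposition~\ref{prop: single}, $\alpha_{T(u)}\le p(1,1-\alpha_{T(v')})$; combining this with monotonicity of $p$ in both of its arguments and a short case analysis over the two pieces of the definition in \eqref{eq:UB} yields the inequality above. Putting the three paragraphs together, the minimum is attained at $w$, so $\Pr[\mathcal{M}(u)=T(u)]=p(\dist(u,w),\alpha_{T(w)})$, as claimed.
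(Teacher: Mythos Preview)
Your plan is correct and matches the paper's intended derivation: the paper states this result as a corollary of Theorem~\ref{theo: uniqueness} without giving a separate proof, so specializing that theorem to $\mathcal{H}_T=\partial_T(\mathcal{G},V)$ is exactly the right move.

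Two points in the opposite-color comparison need tightening. First, the reduction ``let $\tilde v$ be the predecessor of $v'$ \ldots\ by taking the last color transition one may assume $T(\tilde v)=T(u)$'' conflates two different vertices: the predecessor of the given $v'$ need not have color $T(u)$, since the shortest path may end with several consecutive opposite-color vertices. What you actually want is to replace $v'$ by the vertex $v''$ immediately after the last color change on the path; then $v''\in\partial_T(\mathcal{G},T(v'))$ with $\dist(u,v'')\le\dist(u,v')$, its predecessor $\tilde v$ genuinely has color $T(u)$, and since $p$ is nondecreasing in its first argument it suffices to bound $p(\dist(u,v''),1-\alpha_{T(v')})$ from below. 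After this reduction your claim $\dist(u,\tilde v)=\dist(u,v'')-1$ holds because subpaths of shortest paths are shortest. Second, the ``short case analysis over the two pieces of \eqref{eq:UB}'' is both vague and avoidable: by the proof of Proposition~\ref{prop: single} one has $p(d,\alpha)=U_{\texttt{DP}}^d(\alpha)$, hence the composition identity $p(d+1,\beta)=p(d,p(1,\beta))$, and then the boundary-edge bound $\alpha_{T(u)}\le p(1,1-\alpha_{T(v')})$ together with monotonicity of $p$ in its second argument gives directly
\[
p(\dist(u,v''),1-\alpha_{T(v')})=p\bigl(\dist(u,\tilde v),\,p(1,1-\alpha_{T(v')})\bigr)\ge p(\dist(u,\tilde v),\alpha_{T(u)})\ge p(\dist(u,w),\alpha_{T(u)}),
\]
with no case split on $\tau$ needed.
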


Corollary \ref{theo: boundary homo} states that when the restricted mechanism is homogeneous on the boundary, i.e., $\Pr[\mathcal{M}'(u) = T(u)] = \Pr[\mathcal{M}'(w) = T(w)]$ for every $v,w \in \partial_T(\mathcal{G},V)$ such that $T(v)=T(w)$, finding the optimal extension to a non-boundary vertex $u$ using \eqref{eq: unique1} reduces to first finding the closest boundary vertex to $u$, denoted by $w$. Note that by definition of the boundary and the distance, $w$ must have the same true value as $u$. Then, $\Pr[\mathcal{M}(u) = T(u)]$ will be given by \eqref{eq:UB}  with $\alpha = \Pr[\mathcal{M}(w)=T(w)]$ and $d = \dist(u,w)$. A particularly interesting boundary homogeneous case is when the $(\varepsilon,\delta)$-DP mechanism is balanced, i.e., when $\alpha_\texttt{blue} = \alpha_\texttt{red}$, which is stated below.

\begin{corollary}\label{cor:balanced}
In the setting of Corollary~\ref{theo: boundary homo}, suppose $\alpha_\texttt{blue} = \alpha_\texttt{red}$. Then, there exists a unique optimal mechanism and it is such that for every $u \notin \partial _T (\mathcal{G}, V)$,
\begin{align*}
    \Pr[\mathcal{M}(u) = T(u)] = 1 - \frac{e^\varepsilon - 1 - \delta (e^{\varepsilon(d+1)} + e^{d \varepsilon} - 2)}{e^{d \varepsilon} (e^\varepsilon + 1) (e^{\varepsilon} -1)} ,
\end{align*}
where $d$ is the distance of $u$ to the boundary $\partial _T (\mathcal{G}, V)$.
\end{corollary}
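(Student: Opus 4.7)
The strategy is to derive Corollary~\ref{cor:balanced} as an immediate specialization of Corollary~\ref{theo: boundary homo}, after pinning down the common boundary value $\alpha := \alpha_\texttt{blue} = \alpha_\texttt{red}$ that optimality forces. Applying the DP constraints \eqref{eq: p1}--\eqref{eq: p4} across any boundary edge $(v, v')$ with $\Pr[\mathcal{M}(v) = T(v)] = \Pr[\mathcal{M}(v') = T(v')] = \alpha$ yields $\alpha \leq e^\varepsilon(1 - \alpha) + \delta$, i.e., $\alpha \leq \alpha^* := (e^\varepsilon + \delta)/(e^\varepsilon + 1)$. Because $p(d, \cdot)$ is strictly increasing in its second argument, any balanced mechanism with $\alpha < \alpha^*$ is dominated, in the sense of Definition~\ref{def: reasonable utility}, by the balanced mechanism with $\alpha = \alpha^*$ (which raises $\Pr[\mathcal{M}(u)=T(u)]$ at every $u$, boundary or not). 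Hence optimality forces $\alpha = \alpha^*$, from which uniqueness follows.

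Once $\alpha = \alpha^*$ is fixed, Corollary~\ref{theo: boundary homo} gives $\Pr[\mathcal{M}(u) = T(u)] = p(d, \alpha^*)$ with $d = \dist(u, \partial_T(\mathcal{G}, V))$, and the rest is algebra. To choose the branch of \eqref{eq:UB}, I substitute $\alpha^*$ into \eqref{eq:tau0}: a short calculation shows
\[
(e^\varepsilon + 1)(e^\varepsilon \alpha^* - \alpha^* + \delta) = e^\varepsilon(e^\varepsilon + 2\delta - 1),
\]
so the argument of the logarithm collapses to $e^{-\varepsilon}$ and $\tau = \lceil -1 \rceil = -1$. Since every non-boundary vertex lies at distance $d \geq 1 > \tau$ from $\partial_T(\mathcal{G}, V)$, the second branch of \eqref{eq:UB} applies uniformly.

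Finally, I plug $\tau = -1$ and $\alpha = \alpha^*$ into the second branch of \eqref{eq:UB}. The $\min(1, \cdot)$ clause is inactive because the DP constraint across the boundary edge keeps the expression strictly below $1$ for $\varepsilon < \infty$. Placing every term over the common denominator $e^{d\varepsilon}(e^\varepsilon + 1)(e^\varepsilon - 1)$ and collecting produces the closed form
\[
1 - \frac{e^\varepsilon - 1 - \delta\bigl(e^{\varepsilon(d+1)} + e^{d\varepsilon} - 2\bigr)}{e^{d\varepsilon}(e^\varepsilon + 1)(e^\varepsilon - 1)},
\]
as claimed. The only real obstacle is clerical rather than conceptual: the closing simplification is a careful bookkeeping exercise, and one must also check that the $\min$ with $1$ does not bind (which follows from $\alpha^* < 1$ together with the DP bound propagated along any shortest boundary-to-$u$ path).
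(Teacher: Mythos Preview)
The paper never proves this corollary explicitly; it is stated as a specialization of Corollary~\ref{theo: boundary homo} (itself a restatement from the conference version~\cite{RafnoDP2colorPaper}), with the derivation left to the reader. Your route---pin down the common boundary value via the DP constraint on a boundary edge together with the domination order, then evaluate $p(d,\alpha^*)$ from Definition~\ref{def:UB}---is precisely the intended specialization, and the algebra does collapse to the stated closed form. The identification $\alpha=\alpha^*=(e^\varepsilon+\delta)/(e^\varepsilon+1)$ is correct and necessary: plugging $d=0$ into the corollary's formula recovers exactly $\alpha^*$, confirming that the hypothesis of Corollary~\ref{theo: boundary homo} (existence of an \emph{optimal}, i.e.\ maximal, mechanism with those boundary values) forces this value.

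Two caveats. First, formula~\eqref{eq:tau0} literally returns $\tau=-1$ at $\alpha=\alpha^*$, and for this \emph{particular} $\alpha$ plugging $\tau=-1$ into the second branch of~\eqref{eq:UB} happens to agree with the correct value $U_{\texttt{DP}}^d(\alpha^*)$; but $\tau$ is meant to count first-branch iterations of $U_{\texttt{DP}}$, and since $\alpha^* > (1-\delta)/(e^\varepsilon+1)$ the conceptually correct choice is $\tau=0$. With $\tau=0$ the same closed form falls out with less bookkeeping, and the coincidence at $\tau=-1$ is an artifact of $\alpha^*$ sitting exactly one $U_{\texttt{DP},1}$-step above the branch threshold. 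Second, your assertion that the $\min(1,\cdot)$ is inactive is false in general: already at $d=1$ the expression exceeds $1$ whenever $\delta>1/(e^\varepsilon+2)$, and for any $\delta>0$ it exceeds $1$ for $d$ large enough. The ``DP constraint across the boundary edge'' only gives $\alpha^*<1$, which does not prevent $p(d,\alpha^*)$ from saturating. This is really a defect in how the corollary is stated (it omits the cap at $1$), but your justification for dropping the $\min$ does not stand.
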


\subsection{Suitable Pairs}\label{sec:suitable}
To prove our results, we introduce a generalized framework that captures the key conditions of differential privacy. We note that differential privacy imposes local constraints on neighboring vertices. Although non-neighboring vertices ultimately constrain each other, they do so only through intermediate neighboring vertices. These constraints are realized through upper and lower bounds on the probability of the mechanism outputting a value, as captured in \eqref{eq: p1}-\eqref{eq: p4}. Combining \eqref{eq: p2} and \eqref{eq: p3}, we obtain the upper bound 
\begin{align} \label{eq: upper dp}
  U_{\texttt{DP}}(\alpha)  := \min(e^{\varepsilon} \alpha + \delta , \frac{e^{\varepsilon}+\delta-1 +\alpha}{e^{\varepsilon}},1),
\end{align}
where $\alpha = \Pr[\mathcal{M}(u) = j]$. Analogously, combining \eqref{eq: p1} and \eqref{eq: p4}, we obtain the lower bound \begin{align} \label{eq: lower dp}
 L_{\texttt{DP}}(\alpha) := \max (e^{\varepsilon} \alpha- \delta - e^{\varepsilon} +1 , \frac{\alpha- \delta}{e^{\varepsilon}},0).
\end{align}

We generalize this notion in the following definition.

\begin{definition}[Suitable Pair] {\label{def: suitable pair}}
Let $L, U: [0,1] \to [0,1]$ be two increasing functions. We call $(L,U)$ a suitable pair if for every $\alpha \in [0,1]$ the following three properties hold.
\begin{enumerate}
    \item $L(\alpha) \leq \alpha \leq U(\alpha)$,
    \item $L(U(\alpha)) \leq \alpha \leq U(L(\alpha))$,
    \item $U(\alpha) \leq 1- L(1-\alpha)$.
\end{enumerate}
\end{definition}

The $(\varepsilon,\delta)$-DP lower and upper bounds in \eqref{eq: upper dp} and \eqref{eq: lower dp} are then a special case of $(L, U)$ suitable pair.

\begin{proposition}\label{prop: uniqueness}
The functions $U_{\texttt{DP}}$ and $L_{\texttt{DP}}$ are a suitable pair.
\end{proposition}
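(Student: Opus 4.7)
The plan is to decompose both functions into their constituent affine branches and verify each of the three conditions by a branchwise comparison. Write $U_{\texttt{DP}}(\alpha)=\min(f_1(\alpha),f_2(\alpha),f_3(\alpha))$ with $f_1(\alpha)=e^{\varepsilon}\alpha+\delta$, $f_2(\alpha)=(\alpha+e^{\varepsilon}+\delta-1)/e^{\varepsilon}$, $f_3(\alpha)=1$, and $L_{\texttt{DP}}(\alpha)=\max(g_1(\alpha),g_2(\alpha),g_3(\alpha))$ with $g_1(\alpha)=e^{\varepsilon}\alpha-\delta-e^{\varepsilon}+1$, $g_2(\alpha)=(\alpha-\delta)/e^{\varepsilon}$, $g_3(\alpha)=0$. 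Each $f_j$ and $g_i$ is weakly increasing in $\alpha$, so $U_{\texttt{DP}}$ and $L_{\texttt{DP}}$ are increasing as required by the definition. The key structural fact that drives the entire proof is the \emph{conjugate pairing} $g_2\circ f_1=\mathrm{id}$, $g_1\circ f_2=\mathrm{id}$, $f_1\circ g_2=\mathrm{id}$, $f_2\circ g_1=\mathrm{id}$, with the flat branches $f_3\equiv 1\geq\mathrm{id}$ and $g_3\equiv 0\leq\mathrm{id}$; all four identities are one-line algebraic checks.

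For property~(1), I would check term-by-term that $f_j(\alpha)\geq\alpha$ and $g_i(\alpha)\leq\alpha$ for $\alpha\in[0,1]$; the nontrivial cases reduce to $(e^{\varepsilon}-1)\alpha+\delta\geq 0$ and $(e^{\varepsilon}-1)(1-\alpha)+\delta\geq 0$, both immediate from $\varepsilon,\delta\geq 0$. Taking the min and max over the branches yields $L_{\texttt{DP}}(\alpha)\leq\alpha\leq U_{\texttt{DP}}(\alpha)$. For property~(3), rather than chase inequalities I would prove the stronger identity $U_{\texttt{DP}}(\alpha)=1-L_{\texttt{DP}}(1-\alpha)$ by pulling $1-(\cdot)$ through the max and computing $1-g_1(1-\alpha)=f_1(\alpha)$, $1-g_2(1-\alpha)=f_2(\alpha)$, and $1-g_3(1-\alpha)=1=f_3(\alpha)$; then $1-L_{\texttt{DP}}(1-\alpha)=\min_j(1-g_j(1-\alpha))=\min_j f_j(\alpha)=U_{\texttt{DP}}(\alpha)$, so (3) holds with equality.

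Property~(2) is where the conjugate pairing does the real work, and it is also the step most likely to trip one up if attempted by brute-force case analysis on which branch attains the min and which attains the max. My approach: for the upper half, use monotonicity of each $g_i$ together with $U_{\texttt{DP}}(\alpha)\leq f_j(\alpha)$ to bound each branch individually, obtaining $g_1(U_{\texttt{DP}}(\alpha))\leq g_1(f_2(\alpha))=\alpha$, $g_2(U_{\texttt{DP}}(\alpha))\leq g_2(f_1(\alpha))=\alpha$, and $g_3(U_{\texttt{DP}}(\alpha))=0\leq\alpha$; taking the maximum gives $L_{\texttt{DP}}(U_{\texttt{DP}}(\alpha))\leq\alpha$. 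The symmetric argument, using monotonicity of each $f_j$ and $L_{\texttt{DP}}(\alpha)\geq g_i(\alpha)$, yields $f_1(L_{\texttt{DP}}(\alpha))\geq f_1(g_2(\alpha))=\alpha$, $f_2(L_{\texttt{DP}}(\alpha))\geq f_2(g_1(\alpha))=\alpha$, and $f_3(L_{\texttt{DP}}(\alpha))=1\geq\alpha$, so the minimum is $\geq\alpha$. The main thing to keep straight is the crossover in the conjugate pairing, namely that $f_1$ is inverted by $g_2$ and $f_2$ by $g_1$, rather than by the index-matched partner; once this is noted, the proof is essentially mechanical.
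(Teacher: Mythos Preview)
Your proof is correct and follows the same overall strategy as the paper: verify monotonicity and each of the three suitable-pair conditions by decomposing $U_{\texttt{DP}}$ and $L_{\texttt{DP}}$ into their affine branches. The paper organizes the verification into separate lemmas (one per condition, with condition~(3) proved first as the identity $U_{\texttt{DP}}(\alpha)=1-L_{\texttt{DP}}(1-\alpha)$) and relies on an auxiliary piecewise description of $U_{\texttt{DP}}$ to split into cases. Your presentation is tighter in two places: the identity for condition~(3) drops out immediately from the branchwise check $1-g_j(1-\alpha)=f_j(\alpha)$, whereas the paper argues by cases on which branch attains the min or max; and your explicit identification of the conjugate pairing $g_2\circ f_1=g_1\circ f_2=\mathrm{id}$ (and its mirror) makes condition~(2) transparent, while the paper reaches the same inequalities by rewriting each branch of $L_{\texttt{DP}}(U_{\texttt{DP}}(\alpha))\leq\alpha$ as a bound on $U_{\texttt{DP}}(\alpha)$ without naming the inverse-pair structure. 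Neither approach uses anything the other does not; yours is a cleaner packaging of the same computation.
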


The notion of a suitable pair abstracts away the detailed algebraic expressions of differential privacy, e.g., those appearing in \eqref{eq: p1}-\eqref{eq: p4}. Specifically, the composition $U^d$ captures the upper bound condition that a vertex $u$ imposes on other vertices at distance $d$.\footnote{For $d \geq 1$, $U^d$ denotes $d$ compositions of the function $U$. For function $L$, $L^d$ is defined similarly.} We now define the notion of privacy in the suitable pair framework.

\begin{definition}[$(L,U)$-Privacy]\label{def:lu:private}
Let $\mathcal{G}(V,E)$ be a graph and $(L, U)$ be a suitable pair. We say that a randomized mechanism $\mathcal{M}:V \to Q$ is $(L,U)$-private if, for any $u \sim v$ and $j \in Q$, it holds that $\Pr[\mathcal{M}(v) = j] \in [L(\alpha),U(\alpha)]$, where $\alpha = \Pr[\mathcal{M}(u) = j]$.
\end{definition}

In Theorem~\ref{theo: algo 1} we generalize Theorem~\ref{theo: uniqueness} to suitable pairs. We begin by showing an intermediate lemma that specifies the necessary and sufficient conditions for the existence of a mechanism extension.

\begin{lemma} \label{lem: cons}
Let $(L, U)$ be a suitable pair and $\mathcal{M}': \mathcal{H}_T \rightarrow Q$ be a randomized function on a boundary-hitting set $\mathcal{H}_T$. Then, there exists an $(L, U)$-private mechanism $\mathcal{M}:V \to Q$, extending $\mathcal{M}'$, if and only if, for every $u,v \in \mathcal{H}_T$ and $d = \dist(u,v)$, we have $\alpha \leq U^d(\beta)$ and $\beta \leq  U^d(\alpha)$, where $\alpha = \Pr[\mathcal{M'}(u)= j]$ and $\beta = \Pr[\mathcal{M'}(v)= j]$ for an arbitrarily chosen $j$ from $Q$.
\end{lemma}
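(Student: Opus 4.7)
My plan is to prove the two directions of the if-and-only-if separately: necessity by walking the shortest path between two boundary-hitting vertices and iterating the one-step upper bound, and sufficiency by explicitly constructing an extension as a pointwise minimum of $U$-iterates emanating from $\mathcal{H}_T$.

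For the necessity direction, I would assume that $\mathcal{M}$ is an $(L,U)$-private extension of $\mathcal{M}'$ and fix $u,v\in\mathcal{H}_T$ with $d=\dist(u,v)$. Taking any shortest $(u,v)$-path $u=w_0,w_1,\dots,w_d=v$ from Definition~\ref{def:path}, the one-step $(L,U)$-privacy from Definition~\ref{def:lu:private} gives $\Pr[\mathcal{M}(w_{i+1})=j]\le U(\Pr[\mathcal{M}(w_i)=j])$ for every $i$. Since $U$ is increasing, I can iterate $d$ times to obtain $\Pr[\mathcal{M}(v)=j]\le U^d(\Pr[\mathcal{M}(u)=j])$, i.e., $\beta\le U^d(\alpha)$; reversing the path gives $\alpha\le U^d(\beta)$. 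Because $\mathcal{M}|_{\mathcal{H}_T}=\mathcal{M}'$, these are the stated conditions.

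For the sufficiency direction, I would fix a color $j\in Q$ and define, for every $u\in V$,
\begin{equation*}
\Pr[\mathcal{M}(u)=j]\;:=\;\min_{v\in\mathcal{H}_T} U^{\dist(u,v)}\bigl(\Pr[\mathcal{M}'(v)=j]\bigr),
\end{equation*}
and verify three things. First, that $\mathcal{M}|_{\mathcal{H}_T}=\mathcal{M}'$: for $u\in\mathcal{H}_T$, the term $v=u$ contributes $U^0(\Pr[\mathcal{M}'(u)=j])=\Pr[\mathcal{M}'(u)=j]$, and the hypothesis $\alpha\le U^d(\beta)$ shows every other term is at least this value, so the minimum is attained. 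Second, the upper-bound privacy condition across an edge $u_1\sim u_2$: if $v^\star$ achieves the min defining $\Pr[\mathcal{M}(u_1)=j]$, the triangle inequality gives $\dist(u_2,v^\star)\le\dist(u_1,v^\star)+1$, so monotonicity of $U$ together with property~1 ($U(x)\ge x$) yields $\Pr[\mathcal{M}(u_2)=j]\le U^{\dist(u_1,v^\star)+1}(\Pr[\mathcal{M}'(v^\star)=j])=U(\Pr[\mathcal{M}(u_1)=j])$, and the other direction is symmetric. Third, the lower-bound privacy condition follows by applying the increasing function $L$ to the symmetric inequality and using property~2: $L(\Pr[\mathcal{M}(u_1)=j])\le L(U(\Pr[\mathcal{M}(u_2)=j]))\le \Pr[\mathcal{M}(u_2)=j]$.

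The main obstacle I anticipate is reconciling the construction for the single color $j$ with the requirement in Definition~\ref{def:lu:private} that privacy hold for \emph{every} $j\in Q$. In the DP case of Proposition~\ref{prop: uniqueness} the pair is self-dual ($U_{\texttt{DP}}(\alpha)=1-L_{\texttt{DP}}(1-\alpha)$), so the two colors impose identical constraints; for a general suitable pair, property~3 gives only $U(\alpha)\le 1-L(1-\alpha)$, so one must verify that the complementary probabilities $\Pr[\mathcal{M}(u)=j']=1-\Pr[\mathcal{M}(u)=j]$ also lie in $[L(\cdot),U(\cdot)]$ across every edge. I would handle this by combining property~3 with the same triangle-inequality-plus-monotonicity argument used in step (ii), noting that property~3 iterates to $U^d(x)+L^d(1-x)\le 1$, which ensures that the upper bound derived from color $j$ dominates the dual lower bound derived from color $j'$, making the two sets of constraints mutually consistent. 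Once this compatibility is established, the same $\mathcal{M}$ satisfies $(L,U)$-privacy for both colors and extends $\mathcal{M}'$, completing the proof.
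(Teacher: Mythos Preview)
Your proof is correct, but your sufficiency construction differs from the paper's in an interesting way. For necessity, both you and the paper iterate the one-step bound along a shortest path (the paper packages this as Lemma~\ref{lem: generalisedLU}). For sufficiency, however, the paper defines $\Pr[\mathcal{M}(w)=T(w)]=\min_{u\in\mathcal{H}_T}U^{\dist(w,u)}(\Pr[\mathcal{M}'(u)=T(w)])$, i.e.\ it minimizes with respect to the \emph{true color at $w$}, not a globally fixed $j$. This forces a case split on whether $T(v)=T(w)$ for an edge $(v,w)$, and the case $v,w\notin\mathcal{H}_T$ with $T(v)\neq T(w)$ is ruled out precisely by the boundary-hitting property. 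Your fixed-$j$ construction sidesteps this entirely: the triangle-inequality argument works uniformly on every edge regardless of true colors, so you never invoke boundary-hitting in the sufficiency direction (in effect you prove a slightly stronger existence statement). What the paper's construction buys is that it is already the optimal extension of Theorem~\ref{theo: algo 1} and Algorithm~\ref{algo: main}, so the same object serves both the existence lemma and the optimality theorem; your construction proves existence more cleanly but would need to be replaced when you turn to optimality. One small sharpening: for the complementary color $j'$ the relevant step is not ``triangle inequality plus monotonicity'' but rather property~3 together with property~2---from $b\le U(a)$ you get $1-b\ge 1-U(a)\ge L(1-a)$ by property~3, and then $U(1-b)\ge U(L(1-a))\ge 1-a$ by monotonicity of $U$ and property~2 (this is the $d=1$ case of the paper's Lemma~\ref{lem:UBsufficient}); the iterated inequality $U^d(x)+L^d(1-x)\le 1$ is not needed since you only check edges.
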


\begin{theorem} \label{theo: algo 1}
Let $(L, U)$ be a suitable pair and $\mathcal{M}': \mathcal{H}_T \rightarrow Q$ be a randomized function on a boundary-hitting set $\mathcal{H}_T$. Then, Algorithm \ref{algo: main} either outputs the optimal $(L,U)$-private extension of $\mathcal{M}'$ or no $(L,U)$-private extension $\mathcal{M}: V \rightarrow Q$ of $\mathcal{M}'$ exists.
\end{theorem}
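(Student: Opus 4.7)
My plan is to split the proof into three parts, corresponding to (i) the infeasibility branch of the algorithm, (ii) the $(L,U)$-privacy of the mechanism produced in the feasible case, and (iii) its optimality among all $(L,U)$-private extensions.

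For (i), by Lemma~\ref{lem: cons}, an $(L,U)$-private extension of $\mathcal{M}'$ exists if and only if $\alpha \le U^d(\beta)$ and $\beta \le U^d(\alpha)$ hold for every pair $u,v \in \mathcal{H}_T$ at distance $d$, where $\alpha,\beta$ denote the values of $\mathcal{M}'$ at $u,v$ on a fixed color. Since Algorithm~\ref{algo: main} checks these pairwise consistency conditions (or an equivalent propagation along shortest paths), it reports ``no extension'' precisely when no extension exists, settling that branch.

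For (ii), I would first identify the mechanism the algorithm produces in the feasible case. By analogy with \eqref{eq: unique1} in Theorem~\ref{theo: uniqueness}, and with $p$ replaced by its suitable-pair counterpart $U^d$, this mechanism takes the form
\[
\Pr[\mathcal{M}(u)=T(u)] \;=\; \min_{v \in \mathcal{H}_T} U^{\dist(u,v)}\!\bigl(\Pr[\mathcal{M}'(v)=T(u)]\bigr), \qquad u \notin \mathcal{H}_T.
\]
To verify $(L,U)$-privacy on an edge $(u,w)$ with neither endpoint in $\mathcal{H}_T$, pick $v^\star \in \mathcal{H}_T$ attaining the minimum for $u$; the triangle inequality $\dist(w,v^\star) \le \dist(u,v^\star)+1$ together with the monotonicity of $U$ gives $\Pr[\mathcal{M}(w)=j] \le U(\Pr[\mathcal{M}(u)=j])$. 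The matching lower bound $L(\Pr[\mathcal{M}(u)=j]) \le \Pr[\mathcal{M}(w)=j]$ then follows by the symmetric argument and property~2 ($L\circ U \le \mathrm{id}$). The same reasoning applied to the opposite color $\bar j$ imposes a second pair of bounds on $1-\Pr[\mathcal{M}(u)=j]$ and $1-\Pr[\mathcal{M}(w)=j]$, and property~3 is precisely what reconciles the two systems so that $(L,U)$-privacy holds simultaneously in both colors. Edges meeting $\mathcal{H}_T$ are handled by the same inequalities, with the Lemma~\ref{lem: cons} consistency from (i) as the base case.

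For (iii), I would first establish the suitable-pair analogue of Proposition~\ref{prop: single}: iterating Definition~\ref{def:lu:private} along a shortest $u$--$v$ path and using monotonicity of $U$ shows that any $(L,U)$-private mechanism $\tilde{\mathcal{M}}$ satisfies $\Pr[\tilde{\mathcal{M}}(u)=j] \le U^d(\Pr[\tilde{\mathcal{M}}(v)=j])$ at distance $d$. Applying this to an arbitrary extension $\tilde{\mathcal{M}}$ of $\mathcal{M}'$ and minimizing over $v \in \mathcal{H}_T$ yields $\Pr[\tilde{\mathcal{M}}(u)=T(u)] \le \Pr[\mathcal{M}(u)=T(u)]$ for every $u$, which is precisely the domination in Definition~\ref{def: reasonable utility}. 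The main obstacle I anticipate is part~(ii): when property~3 is a strict inequality (unlike the DP case in which it is an equality), the bounds coming from the ``$j$-chain'' and the ``$\bar j$-chain'' need not coincide, so a careful invocation of property~3 is required to confirm that a single per-color minimum formula still defines a genuinely $(L,U)$-private mechanism on both outputs.
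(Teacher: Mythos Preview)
Your proposal is correct and follows essentially the same route as the paper. The only difference is organizational: the paper's proof of Theorem~\ref{theo: algo 1} is very short because the privacy verification you sketch in part~(ii) is already the content of the reverse direction of Lemma~\ref{lem: cons} (the algorithm's assignment is exactly the construction~\eqref{eq:construction3} used there), and the ``suitable-pair analogue of Proposition~\ref{prop: single}'' you plan to establish in part~(iii) is already available as Lemma~\ref{lem: generalisedLU}. So rather than re-deriving these, the paper simply cites Lemma~\ref{lem: cons} for feasibility and privacy, and Lemma~\ref{lem: generalisedLU} for the domination inequality $\Pr[\mathcal{M}_2(w)=T(w)] \le \min_{u \in \mathcal{H}_T} U^{\dist(w,u)}(\Pr[\mathcal{M}'(u)=T(w)]) = \Pr[\mathcal{M}(w)=T(w)]$. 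Your anticipated obstacle in part~(ii) about property~3 being a strict inequality is handled in the paper via Lemma~\ref{lem:UBsufficient}, which shows the upper-bound checks on one color suffice; your direct invocation of property~3 would work equally well.
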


\begin{algorithm}[h] 
\caption{The Optimal $(L,U)$-Private Extension}\label{algo: main}
    \SetAlgoLined
    \SetAlgoNoLine
    \SetAlgoLined
    \textbf{Input:} Graph $\mathcal{G}(V,E)$, true function $T$, suitable pair functions $(L,U)$,  boundary-hitting set $\mathcal{H}_T \subseteq V$,  randomized function $\mathcal{M}': \mathcal{H}_T \rightarrow Q$.
    
    \textbf{Output:} Optimal $(L,U)$-private extension $\mathcal{M}: V \rightarrow Q$ of $\mathcal{M}'$ if one exists.
        
    Choose $j\in Q$

    \For{$w, v \in \mathcal{H}_T$}{
    $d = \dist(w, v).$ 
    
    \If{
    $\Pr[\mathcal{M}'(w)=j] > U^{d}(\Pr[\mathcal{M}'(v)=j])$ 
    \\ \textit{\textbf{or}}
    $\Pr[\mathcal{M}'(v)=j] > U^{d}(\Pr[\mathcal{M}'(w)=j])$
    }
    {
    \Return {\textit{``No $(L,U)$-private  extension exists.''}}
    }}
    
    \For{$w \in V - \mathcal{H}_T$}
    {$\Pr[\mathcal{M}(w)=T(w)] = \min \limits_{u \in \mathcal{H}_T} U^{\dist(w,u)}(\Pr[\mathcal{M}'(u)=T(w)])$.}
    \Return Optimal extension $\mathcal{M}: V \rightarrow Q$ of $\mathcal{M}'$.
        
\end{algorithm}

Algorithm \ref{algo: main} works as follows. First, it checks whether the randomized function $\mathcal{M}'$ is extensible at all. If there exists a vertex $w \in \mathcal{H}_T$ such that the probability of outputting the true value $T(w)$ exceeds the $(L,U)$ bounds imposed by all other vertices in $\mathcal{H}_T$, then an extension is not possible.\footnote{In Algorithm \ref{algo: main}, we have fixed $j \in Q$ at the beginning of the algorithm. However, this is not necessary. Based on Lemma~\ref{lem:UBsufficient}, it is possible to select a different $j$ in each iteration of the for-loop.} Otherwise, for each $u$ not in $\mathcal{H}_T$ the algorithm assigns $\Pr[\mathcal{M}(u)=T(u)]$ to be the minimum upper bound imposed by the vertices in $\mathcal{H}_T$. In this way, it obtains the unique optimal $(L,U)$-private extension of $\mathcal{M}'$. Theorem~\ref{theo: uniqueness} follows from Algorithm~\ref{algo: main} by setting the upper and lower bound functions for DP according to \eqref{eq: upper dp} and \eqref{eq: lower dp}.

From a computational complexity point of view, the significance of Algorithm \ref{algo: main} is as follows. For a given vertex $u$, a naive approach would consider every possible path between $u$ and all other vertices in the graph to determine if it satisfies the privacy constraints. Whereas Algorithm \ref{algo: main} shows that as long as $\mathcal{H}_T$ is a boundary-hitting set, it is sufficient to consider paths between $u$ and $\mathcal{H}_T$, thus reducing computational complexity. Moreover, one does not need to consider all paths, but only the shortest path between $u$ and each $w \in \mathcal{H}_T$.

However, in many applications, one might not necessarily be interested in retrieving the whole optimal mechanism but instead in evaluating it on a particular dataset $u \in V$, reducing complexity even further. For this, we present Algorithm~\ref{algo: single}.

\begin{algorithm}[htb] 
\caption{The Optimal $(L,U)$-Private Extension Evaluated at on Particular Dataset}\label{algo: single}
    \SetAlgoLined
    \SetAlgoNoLine
    \SetAlgoLined
    \textbf{Input:} Graph $\mathcal{G}(V,E)$, true function $T$, suitable pair functions $(L,U)$,  boundary-hitting set $\mathcal{H}_T \subseteq V$,  randomized function $\mathcal{M}': \mathcal{H}_T \rightarrow Q$, a vertex $u \in V-\mathcal{H}_T$.
    
    \textbf{Output:} Optimal $(L,U)$-private extension $\mathcal{M}(u)$.
    
    Choose $j\in Q$
    
    \For{$w, v \in \mathcal{H}_T$}{
    $d = \dist(w, v).$ 
    
    \If{
    $\Pr[\mathcal{M}'(w)=j] > U^{d}(\Pr[\mathcal{M}'(v)=j])$ 
    \\ \textit{\textbf{or}}
    $\Pr[\mathcal{M}'(v)=j] > U^{d}(\Pr[\mathcal{M}'(w)=j])$
    }
    {
    \Return {\textit{``No $(L,U)$-private  extension exists.''}}
    }}
    
    \Return $ \Pr[\mathcal{M}(u)=T(u)] = \min \limits_{w \in \mathcal{H}_T} U^{\dist(w,u)}(\Pr[\mathcal{M}'(w)=T(u)])$.
        
\end{algorithm}

Whereas in Algorithm~\ref{algo: main} we must compute all shortest paths between vertices in $V-\mathcal{H}_T$ and those in $\mathcal{H}_T$, in Algorithm~\ref{algo: single} we need only to compute the shortest path between $u$ and $\mathcal{H}_T$. If we denote the ball centered at the vertex $u$ with radius at maximum distance to $\mathcal{H}_T$ by $\texttt{B}$, then the complexity of finding the shortest path between $u$ and $\mathcal{H}_T$ using the Dijkstra Algorithm \cite{cormen2022introduction} is $\Theta(|E(\texttt{B})|+|V(\texttt{B})|\log|V(\texttt{B})|)$, where $E(\texttt{B})$ and $V(\texttt{B})$ are the edges and vertices included in the ball, respectively. The complexity of checking for the existence of an extension is $\mathcal{O}(|\mathcal{H}_T|^2)$. Thus, the time complexity of Algorithm~\ref{algo: single} is $\Theta(|E(\texttt{B})|+|V(\texttt{B})|\log|V(\texttt{B})|)+\mathcal{O}(|\mathcal{H}_T|^2)$.

In Proposition~\ref{prop: single}, we show that when the $(L,U)$ suitable pair in Algorithm~\ref{algo: main} comes from the differential privacy framework, $U^d (\alpha) = p(d,\alpha)$ where $p(d,\alpha)$ was defined in Definition~\ref{def:UB}. The following example shows how Algorithm~\ref{algo: main} works.

\begin{example}\label{ex: algorithm}
Consider the graph in Fig.~\ref{fig: counter} again and let the boundary hitting set be $\mathcal{H}_T = \{v_1, v_4\}$. Let $\epsilon = \log(2)$, $\delta = 0$ and fix the mechanism on $\mathcal{H}_T $ as $\alpha = \Pr[\mathcal{M}(v_1) = \texttt{blue}]= 1-\Pr[\mathcal{M}(v_1) = \texttt{red}] = 0.3$ and $\beta =\Pr[\mathcal{M}(v_4) = \texttt{blue}]= 1- \Pr[\mathcal{M}(v_4) = \texttt{red}] = 0.1$. Note that $\dist(v_1, v_4) = 3$. 

Algorithm~\ref{algo: main} we first checks that the mechanism can be extended. Let $j = \texttt{blue}$. Algorithm~\ref{algo: main} checks that $0.3\leq U^3(0.1) 
= p(3,0.1) = 0.7$ and $0.1\leq U^3(0.3) = p(3,0.3) = 0.9$. Therefore, the partial mechanism can be extended.\footnote{For $\epsilon = \log(2)$ and $\delta = 0$,~\eqref{eq:tau0} gives $\tau = 2$ for $\alpha = 0.1$ and $\tau = 1$ for $\alpha = 0.3$. Therefore, $p(3,0.1) = 0.7$ and $p(3,0.3) = 0.9$. }

Next, Algorithm~\ref{algo: main} assigns optimal values to $\Pr[\mathcal{M}(v_2) = \texttt{blue}]$, and $\Pr[\mathcal{M}(v_3) = \texttt{blue}]$. Let us first consider $v_2$ and the upper bound that each vertex $v_1$ and $v_4$ impose on $v_2$. From  Definition~\ref{def:UB}, we have $U^{\dist(v_1,v_2)}(0.3) = p(1,0.3) = 0.6$ and $U^{\dist(v_2,v_4)}(0.1) = p(2,0.1) = 0.4$. Therefore, $\Pr[\mathcal{M}(v_2) = \texttt{blue}] = 0.4$. This is remarkable in the sense that $v_1$ which is the closest vertex to $v_2$ in the boundary hitting set is not the one that imposes the tightest upper bound on $\Pr[\mathcal{M}(v_2) = \texttt{blue}]$. Instead, $v_4$ which is farther from $v_2$, has a more restrictive effect on $v_4$ for taking its true value $\texttt{blue}$. Finally, consider $v_3$. We have $U^{\dist(v_1,v_3)}(0.3) = p(2,0.3) = 0.8$ and $U^{\dist(v_3,v_4)}(0.1) = p(1,0.1) = 0.2$. Therefore, $\Pr[\mathcal{M}(v_3) = \texttt{blue}] = 0.2$.

\end{example}

\section{Proofs}\label{sec:proofs}

In this section, we prove all our results. We start by showing some intermediate lemmas that we use.

\subsection{Intermediate Lemmas}

This lemma states a graph-theoretic result that we use in various proofs.
\begin{lemma}\label{lem: dist}
     Let $\mathcal{G}(V,E)$ be a graph and $u\sim v \in V$. Then for every vertex $w\in V$, we have $\dist(u, w) \leq \dist(v,w)+1$.
\end{lemma}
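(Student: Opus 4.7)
The plan is to prove this directly by explicit path construction, exploiting the fact that graph distance satisfies a triangle inequality and that $u \sim v$ means $\dist(u,v) = 1$.

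First I would let $n = \dist(v,w)$ and invoke Definition~\ref{def:path} to obtain a shortest $(v,w)$-path, say $v = v_0, v_1, \dots, v_n = w$, whose consecutive pairs are edges of $\mathcal{G}$. Next, since $u \sim v$ by hypothesis, the pair $(u, v_0) = (u,v)$ is also an edge. Prepending $u$ to the chosen $(v,w)$-path therefore yields a valid sequence $u, v_0, v_1, \dots, v_n = w$ in which every consecutive pair lies in $E$, so this is a legitimate $(u,w)$-path of length $n+1$.

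Finally, since $\dist(u,w)$ is defined as the \emph{shortest} $(u,w)$-path length, it is upper bounded by the length of the specific path just constructed, giving $\dist(u,w) \leq n + 1 = \dist(v,w) + 1$, as desired.

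There is no genuine obstacle here: the argument is just the standard triangle inequality $\dist(u,w) \leq \dist(u,v) + \dist(v,w)$ applied with the observation that $u \sim v$ forces $\dist(u,v) \leq 1$. The only mild subtlety is ensuring that $u$ does not already appear in the chosen $(v,w)$-path, but this matters only for whether the constructed path is simple, not for its validity as a walk; since the distance is defined via path length and any walk of length $k$ can be shortened to a path of length at most $k$, the bound $\dist(u,w) \leq n+1$ still holds regardless.
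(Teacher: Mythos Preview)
Your proof is correct and follows essentially the same approach as the paper: both take a shortest path between $v$ and $w$, attach $u$ via the edge $u\sim v$, and conclude that $\dist(u,w)$ is at most one more than $\dist(v,w)$. Your added remark about walks versus simple paths is a nice clarification but not needed for the argument.
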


\begin{proof}

Consider a shortest path $w=w_0,w_1,\ldots,w_{\dist(w,v)}=v$ connecting $w$ to $v$. Since $u$ is a neighbor of $v$, then $w=w_0,w_1,\ldots,w_{\dist(w,v)}=v,u$ is a path connecting $w$ to $u$ with length $\dist(w,v)+1$. Thus, $d(w,u) \leq \dist(w,v)+1$.
\end{proof}

The triangular inequality follows as an extension, where for any $u,v,w \in V$, $\dist(u, w) \leq \dist(u,v)+\dist(v,w)$.

In the next two lemmas, we provide explicit forms for the upper bound function $U_{\texttt{DP}}$ and the lower bound function $L_{\texttt{DP}}$ in the suitable pair, corresponding to the $(\varepsilon,\delta)$-DP mechanism.

\begin{lemma}\label{lem: iff}
The upper bound function in \eqref{eq: upper dp} can be rewritten as follows.
\begin{equation*}
U_{\texttt{DP}}(\alpha) =
\left\{ \begin{aligned} 
  &e^{\varepsilon} \alpha + \delta &{\text{ if  }} & \alpha \leq \frac{1 - \delta}{ e^{\varepsilon}+ 1}, \label{eq: iff:1}  \\
  &\frac{e^{\varepsilon}+\delta-1 +\alpha}{e^{\varepsilon}} &{\text{ if  }} & \frac{1 - \delta}{e^{\varepsilon}+ 1} \leq \alpha \leq 1- \delta,  \\
  & 1 &\text{ if  } & 1- \delta \leq \alpha.
\end{aligned} \right.
\end{equation*}
\end{lemma}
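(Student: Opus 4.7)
The plan is to prove the three-case piecewise formula by pairwise comparing the three quantities inside the $\min$ in \eqref{eq: upper dp}, finding the two crossover thresholds, and checking that the thresholds are ordered correctly so that the identified minimum on each subinterval really is the minimum of all three.

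First I would compare the two affine expressions $f_1(\alpha) = e^{\varepsilon}\alpha + \delta$ and $f_2(\alpha) = \frac{e^{\varepsilon}+\delta-1+\alpha}{e^{\varepsilon}}$. Clearing denominators in $f_1(\alpha)\leq f_2(\alpha)$ gives $e^{2\varepsilon}\alpha + e^{\varepsilon}\delta \leq e^{\varepsilon}+\delta-1+\alpha$, which rearranges to $(e^{2\varepsilon}-1)\alpha \leq (1-\delta)(e^{\varepsilon}-1)$. Factoring $e^{2\varepsilon}-1=(e^{\varepsilon}-1)(e^{\varepsilon}+1)$ and dividing (using $\varepsilon\geq0$ to treat the case $\varepsilon=0$ separately, where $f_1=f_2$ trivially), this is equivalent to $\alpha \leq \tfrac{1-\delta}{e^{\varepsilon}+1}$. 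So $f_1$ is the smaller of the two affine terms precisely on $[0,\tfrac{1-\delta}{e^{\varepsilon}+1}]$, and $f_2$ is smaller on $[\tfrac{1-\delta}{e^{\varepsilon}+1},1]$.

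Next I would compare $f_2$ with the constant $1$: $f_2(\alpha)\leq 1 \iff \alpha \leq 1-\delta$, and I would observe the ordering $0 \leq \tfrac{1-\delta}{e^{\varepsilon}+1} \leq \tfrac{1-\delta}{e^{\varepsilon}} \leq 1-\delta \leq 1$, which holds since $\delta\in[0,1)$ and $\varepsilon\geq 0$. In particular, at the first threshold $\alpha = \tfrac{1-\delta}{e^{\varepsilon}+1}$ we can verify by direct substitution that both $f_1$ and $f_2$ equal $\tfrac{e^{\varepsilon}+\delta}{e^{\varepsilon}+1} \leq 1$, and at the second threshold $\alpha = 1-\delta$ we have $f_2 = 1$. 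This confirms the constant $1$ term is not active until $\alpha\geq 1-\delta$.

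Combining these three pieces of information, on $[0, \tfrac{1-\delta}{e^{\varepsilon}+1}]$ the smallest term is $f_1 = e^{\varepsilon}\alpha+\delta$; on $[\tfrac{1-\delta}{e^{\varepsilon}+1}, 1-\delta]$ the smallest term is $f_2 = \tfrac{e^{\varepsilon}+\delta-1+\alpha}{e^{\varepsilon}}$ (and it does not exceed $1$); and on $[1-\delta, 1]$ the smallest term is $1$. This is exactly the claimed piecewise formula. The only mild obstacle is bookkeeping the boundary cases $\varepsilon=0$ (where the division by $e^{\varepsilon}-1$ must be avoided) and $\delta=0$ (which collapses the second threshold cleanly to $1$), but both are easily checked directly from \eqref{eq: upper dp}.
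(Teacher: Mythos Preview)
Your proposal is correct and follows essentially the same approach as the paper: both proofs establish the piecewise form by pairwise comparing the three terms in the $\min$, showing $e^{\varepsilon}\alpha+\delta \leq \frac{e^{\varepsilon}+\delta-1+\alpha}{e^{\varepsilon}}$ iff $\alpha \leq \frac{1-\delta}{e^{\varepsilon}+1}$, showing $\frac{e^{\varepsilon}+\delta-1+\alpha}{e^{\varepsilon}} \leq 1$ iff $\alpha \leq 1-\delta$, and checking that the first term does not exceed $1$ on the first subinterval. Your presentation is slightly more explicit about verifying the ordering of the thresholds and handling the degenerate case $\varepsilon=0$, but the underlying argument is the same.
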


\begin{proof}

In the first case,

\begin{align*}
    \alpha  \leq  \frac{1 - \delta}{e^{\varepsilon}+1} &\Leftrightarrow  (e^{2\varepsilon}-1) \alpha  \leq  (e^{\varepsilon} -1)  (1 - \delta) \\
    &\Leftrightarrow  e^{2\varepsilon} \alpha -\alpha  \leq  e^{\varepsilon} -1  + \delta - e^{\varepsilon} \delta \\ 
    &\Leftrightarrow  e^{2\varepsilon} \alpha   + e^{\varepsilon} \delta \leq \alpha -1 + e^{\varepsilon}   + \delta \\
    &\Leftrightarrow  e^{\varepsilon} \alpha  + \delta \leq  \frac{e^{\varepsilon}+\delta-1+\alpha}{e^{\varepsilon}}, 
\end{align*}
where the first equivalence holds if $e^{\varepsilon} -1 > 0$; if $e^{\varepsilon} -1 = 0$ the last inequality also holds. Also,
\begin{align*}
    \alpha  \leq  \frac{1 - \delta}{e^{\varepsilon}+1} &\Leftrightarrow   \alpha e^{\varepsilon} + \alpha \leq 1-\delta \\ 
    &\Leftrightarrow \alpha e^{\varepsilon}+\delta \leq 1-\alpha\\
    &\Rightarrow \alpha e^{\varepsilon}+\delta \leq  1.
\end{align*}

In the second case,
\begin{align*}
    \alpha  \leq  1 - \delta &\Leftrightarrow \alpha -1 + e^{\varepsilon} + \delta \leq e^{\varepsilon} \\
    &\Leftrightarrow \frac{ e^{\varepsilon}+\delta-1+\alpha}{e^{\varepsilon}} \leq 1.
\end{align*}

Finally,
\begin{align*}
      1 - \delta \leq \alpha &\Leftrightarrow e^\varepsilon \leq e^\varepsilon + \delta -1 + \alpha \\
      &\Leftrightarrow 1 \leq \frac{e^\varepsilon + \delta -1 + \alpha}{e^\varepsilon}.
\end{align*}
\end{proof}

\begin{lemma}\label{lem: lower:iff}
The lower bound function in \eqref{eq: lower dp} can be rewritten as follows.
\begin{equation*}
L_{\texttt{DP}}(\alpha) =
\left\{ \begin{aligned}
&0 & {\text{ if  }}&\alpha \leq \delta, \\
  &\frac{\alpha- \delta}{e^{\varepsilon}} &{\text{ if  }} & \delta \leq \alpha \leq \frac{\delta+e^{\varepsilon}}{ e^{\varepsilon}+ 1},   \\
  &e^{\varepsilon} \alpha- \delta - e^{\varepsilon} +1 &{\text{ if  }} & \alpha \geq \frac{\delta+e^{\varepsilon}}{ e^{\varepsilon}+ 1}.
\end{aligned} \right.
\end{equation*}
\end{lemma}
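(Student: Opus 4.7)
My plan is to mirror the structure of the proof of Lemma~\ref{lem: iff}: verify in each of the three $\alpha$-ranges that the correct branch of the $\max$ is the one realized, i.e., that it dominates the other two terms and is nonnegative (so that it is also at least $0$). Concretely, the three terms inside the $\max$ are $T_1(\alpha)=e^{\varepsilon}\alpha-\delta-e^{\varepsilon}+1$, $T_2(\alpha)=\frac{\alpha-\delta}{e^{\varepsilon}}$, and $T_3(\alpha)=0$, so I only need pairwise comparisons among these.

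First I would determine the pairwise ``crossings''. The inequality $T_2(\alpha)\ge 0$ is equivalent to $\alpha\ge\delta$; the inequality $T_1(\alpha)\ge T_2(\alpha)$ is, after clearing $e^{\varepsilon}$ and factoring the difference of squares on the left (as in the dual step of Lemma~\ref{lem: iff}), equivalent to $(e^{\varepsilon}-1)(e^{\varepsilon}+1)\alpha\ge(e^{\varepsilon}-1)(e^{\varepsilon}+\delta)$, i.e., $\alpha\ge\frac{\delta+e^{\varepsilon}}{e^{\varepsilon}+1}$ (trivially true if $e^{\varepsilon}=1$). These two thresholds are exactly the breakpoints appearing in the statement. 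Using these I would establish three implications, one per range: (i) if $\alpha\le\delta$ then $T_2(\alpha)\le 0$ and one checks $T_1(\alpha)\le 0$ as well, so $\max=0$; (ii) if $\delta\le\alpha\le\frac{\delta+e^{\varepsilon}}{e^{\varepsilon}+1}$ then $T_2(\alpha)\ge 0$ and $T_2(\alpha)\ge T_1(\alpha)$, so $\max=T_2(\alpha)$; (iii) if $\alpha\ge\frac{\delta+e^{\varepsilon}}{e^{\varepsilon}+1}$ then $T_1(\alpha)\ge T_2(\alpha)$ and (since this threshold is $\ge\delta$) $T_1(\alpha)\ge 0$, so $\max=T_1(\alpha)$. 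Each implication reduces to the same algebraic equivalences as above, handled in the exact style of the chain of ``$\Leftrightarrow$'' lines in the proof of Lemma~\ref{lem: iff}.

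A cleaner alternative, which I would probably present as a remark, is to invoke the symmetry $L_{\texttt{DP}}(\alpha)=1-U_{\texttt{DP}}(1-\alpha)$: substituting $1-\alpha$ into the three branches of Lemma~\ref{lem: iff} and taking $1-(\,\cdot\,)$ swaps the $\min$ with a $\max$, converts $e^{\varepsilon}(1-\alpha)+\delta$ into $e^{\varepsilon}\alpha-\delta-e^{\varepsilon}+1$, converts $\frac{e^{\varepsilon}+\delta-1+(1-\alpha)}{e^{\varepsilon}}$ into $\frac{\alpha-\delta}{e^{\varepsilon}}$, and converts the constant $1$ into $0$; the breakpoints $\alpha\le\frac{1-\delta}{e^{\varepsilon}+1}$ and $\alpha\le 1-\delta$ in Lemma~\ref{lem: iff} correspond under $\alpha\mapsto 1-\alpha$ exactly to $\alpha\ge\frac{\delta+e^{\varepsilon}}{e^{\varepsilon}+1}$ and $\alpha\ge\delta$, matching the statement.

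There is no real obstacle here; the only mild care is the degenerate case $e^{\varepsilon}=1$ (i.e., $\varepsilon=0$), where the factor $(e^{\varepsilon}-1)$ that I cancel is zero. In that case $T_1(\alpha)=\alpha-\delta=T_2(\alpha)$, so the second and third branches agree and the piecewise formula collapses consistently. I would include a one-line remark to cover this edge case so that the chain of equivalences is valid throughout.
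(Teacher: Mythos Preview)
Your proposal is correct and follows essentially the same approach as the paper's proof: both argue by pairwise comparison of the three terms in the $\max$, identifying the crossover thresholds $\alpha=\delta$ and $\alpha=\frac{\delta+e^{\varepsilon}}{e^{\varepsilon}+1}$ via the same algebraic equivalences. Your treatment is slightly more explicit (you spell out all three ranges and address the $\varepsilon=0$ degeneracy), and your alternative symmetry argument via $L_{\texttt{DP}}(\alpha)=1-U_{\texttt{DP}}(1-\alpha)$ is a valid shortcut that the paper does not use here, though it proves that identity separately in Lemma~\ref{lem: u+l=1}.
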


\begin{proof}
Note that $L_{\texttt{DP}}(\alpha) = 0$ if and only if $e^{\varepsilon} \alpha- \delta - e^{\varepsilon} +1 \leq 0$ and $\frac{\alpha- \delta}{e^{\varepsilon}} \leq 0$. This will happen if and only if $\alpha \leq \frac{e^{\varepsilon}-1+\delta}{e^{\varepsilon}}$ and $\alpha \leq \delta$, respectively. Since $\delta \leq \frac{e^{\varepsilon}-1+\delta}{e^{\varepsilon}}$, the first case follows.

For the second and third cases, 
\begin{align*}
    \alpha  \leq  \frac{\delta+e^{\varepsilon}}{e^{\varepsilon}+1} &\Leftrightarrow  (e^{\varepsilon}+1) \alpha  \leq   ( \delta+e^{\varepsilon}) \\
    &\Leftrightarrow  (e^{2\varepsilon}-1) \alpha   \leq  \delta(e^{\varepsilon}-1)+e^{\varepsilon}(e^{\varepsilon}-1)\\ 
    &\Leftrightarrow e^{2\varepsilon}\alpha-e^{\varepsilon}\delta-e^{2\varepsilon}+e^{\varepsilon} \leq \alpha-\delta\\ 
    &\Leftrightarrow 
    e^{\varepsilon} \alpha- \delta - e^{\varepsilon} +1 \leq \frac{\alpha- \delta}{e^{\varepsilon}}.
\end{align*}
\end{proof}

The next lemma establishes the symmetric nature of suitable pair functions.

\begin{lemma}{\label{lem:symmetry}}
Let $\alpha_1, \alpha_2 \in [0,1]$ and $(L, U)$ be a suitable pair. Then, $\alpha_2 \in [L(\alpha_1),U(\alpha_1)]$ if and only if $\alpha_1 \in [L(\alpha_2),U(\alpha_2)]$.
\end{lemma}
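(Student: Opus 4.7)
The plan is to prove this by a direct monotonicity argument, using only properties (1) and (2) of Definition~\ref{def: suitable pair} (property (3) is not needed, which makes sense because the statement is inherently two-sided and does not involve complementation). Because both implications in the biconditional have the same shape, it suffices to prove one direction; the other follows by relabeling.

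First I would assume $\alpha_2 \in [L(\alpha_1), U(\alpha_1)]$, i.e., $L(\alpha_1) \leq \alpha_2 \leq U(\alpha_1)$, and derive the two inequalities $L(\alpha_2) \leq \alpha_1$ and $\alpha_1 \leq U(\alpha_2)$ separately. For the first, I apply the increasing function $L$ to the inequality $\alpha_2 \leq U(\alpha_1)$, obtaining $L(\alpha_2) \leq L(U(\alpha_1))$; then property (2) of the suitable pair gives $L(U(\alpha_1)) \leq \alpha_1$, so chaining yields $L(\alpha_2) \leq \alpha_1$. For the second, I apply the increasing function $U$ to the inequality $L(\alpha_1) \leq \alpha_2$, obtaining $U(L(\alpha_1)) \leq U(\alpha_2)$; property (2) gives $\alpha_1 \leq U(L(\alpha_1))$, and chaining yields $\alpha_1 \leq U(\alpha_2)$. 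Together these give $\alpha_1 \in [L(\alpha_2), U(\alpha_2)]$.

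The converse direction follows immediately by swapping the roles of $\alpha_1$ and $\alpha_2$ in the above argument, since Definition~\ref{def: suitable pair} is stated for arbitrary points in $[0,1]$. There is no real obstacle here: the argument is entirely mechanical once one recognizes that the two bounds $L(U(\alpha)) \leq \alpha$ and $\alpha \leq U(L(\alpha))$ from property (2) are exactly the ``inverse'' statements needed to flip monotone applications of $L$ and $U$. I expect the proof to be only a few lines in the final write-up.
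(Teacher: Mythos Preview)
Your proposal is correct and is essentially identical to the paper's own proof: both apply the monotonicity of $L$ and $U$ to the two assumed inequalities and then invoke property~(2) of Definition~\ref{def: suitable pair} to chain to the desired bounds, with the converse handled by symmetry. The only difference is the order in which the two inequalities are derived, which is immaterial.
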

\begin{proof}
Let $(L, U)$ be a suitable pair and suppose that $L(\alpha_1) \leq \alpha_2 \leq U(\alpha_1)$.
In particular, we have $L(\alpha_1) \leq \alpha_2$. Since by definition, $U$ is an increasing function, it follows that $U(L(\alpha_1)) \leq U(\alpha_2)$. On the other hand, according to the definition of $(L, U)$ suitable pair,  $\alpha_1 \leq U(L(\alpha_1))$. Therefore, $\alpha_1  \leq U(L(\alpha_1)) \leq U(\alpha_2)$.

Similarly, we have $\alpha_2 \leq U(\alpha_1)$. Since $L$ is also an increasing function, we have $L(\alpha_2) \leq L(U(\alpha_1))$. According to the definition of $(L, U)$ suitable pair,  $L(U(\alpha_1))\leq \alpha_1$. Therefore, $L(\alpha_2)\leq L(U(\alpha_1))\leq \alpha_1$. Thus, $\alpha_1 \in [L(\alpha_2),U(\alpha_2)]$. 

The reverse implication follows analogously.
\end{proof}

\begin{lemma}\label{lem: generalisedLU}
Mechanism $\mathcal{M}:V \to Q$ is $(L, U)$-private if and only if for every color $j \in Q$ and every two datasets $u,v \in V$ we have:
\[\beta \in [L^d(\alpha),U^d(\alpha)],\]
where $\alpha:=\Pr[\mathcal{M}(u) = j]$, $\beta:=\Pr[\mathcal{M}(v) = j]$ and $d = \dist(u,v)$ is the distance between them.
\end{lemma}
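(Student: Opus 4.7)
The plan is to prove the two directions separately, with the reverse implication being essentially immediate and the forward implication handled by induction on the graph distance.

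First, for the ``if'' direction, I would simply specialize the hypothesis to neighboring vertices. If $u \sim v$, then $d = \dist(u,v) = 1$, so the assumption $\beta \in [L^d(\alpha), U^d(\alpha)]$ reduces to $\beta \in [L(\alpha), U(\alpha)]$, which is exactly Definition~\ref{def:lu:private}. So the mechanism is $(L,U)$-private.

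For the ``only if'' direction, I would induct on $d = \dist(u,v)$. The base case $d = 1$ is the definition of $(L,U)$-privacy applied to the neighboring pair $u, v$. For the inductive step with $d \geq 2$, fix a shortest $(u,v)$-path $u = w_0, w_1, \ldots, w_d = v$ and set $\alpha_i := \Pr[\mathcal{M}(w_i) = j]$. Since $w_i \sim w_{i+1}$, $(L,U)$-privacy gives $\alpha_{i+1} \in [L(\alpha_i), U(\alpha_i)]$ for every $0 \leq i \leq d-1$. I would then invoke the monotonicity of $U$ stipulated in Definition~\ref{def: suitable pair} to chain the upper bounds: from $\alpha_1 \leq U(\alpha_0)$ and $\alpha_2 \leq U(\alpha_1)$, monotonicity yields $\alpha_2 \leq U(U(\alpha_0)) = U^2(\alpha_0)$, and iterating gives $\alpha_d \leq U^d(\alpha_0)$. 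The symmetric argument using monotonicity of $L$ gives $\alpha_d \geq L^d(\alpha_0)$. Combined, $\beta = \alpha_d \in [L^d(\alpha), U^d(\alpha)]$, as required.

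I do not anticipate a significant obstacle here: the crucial ingredient is already available, namely the monotonicity of $L$ and $U$ baked into Definition~\ref{def: suitable pair}, which lets the local (one-step) bounds propagate cleanly into $d$-fold compositions along a shortest path. The only mild point to be careful about is ensuring I use a \emph{shortest} path (so that the exponent $d$ matches $\dist(u,v)$ rather than the length of some longer route), and noting that Lemma~\ref{lem:symmetry} already guarantees that the two directions of the neighboring bound are equivalent, so there is no ambiguity in which endpoint of an edge we treat as ``inducing'' the bound on the other.
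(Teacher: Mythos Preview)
Your proposal is correct and follows essentially the same approach as the paper: the backward direction is handled by specializing to $d=1$, and the forward direction is proved by induction along a shortest path, using the monotonicity of $L$ and $U$ to propagate the one-step bounds into $d$-fold compositions. The paper phrases the inductive step by peeling off the last edge and invoking the inductive hypothesis on the prefix of length $d$, whereas you iterate edge by edge, but this is only a cosmetic difference.
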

\begin{proof}
We first prove the forward direction through induction. For the case $d = 1$, the claim follows directly from Definition \ref{def:lu:private}. Suppose that the claim holds for $d$. Let $u,v \in V$ such that $\dist(u,v) = d+1$. Consider a path of length $d+1$ from $u$ to $v$. Let $w$ be the one-before-the-last vertex in the path, where $\dist(u,w) = d$. Denote $\gamma:=\Pr[\mathcal{M}(w) = j]$. By the induction hypothesis for the vertices $u, w$ we have $\gamma \in [L^d(\alpha),U^d(\alpha)]$.
Also, since $L, U$ are increasing functions, we have
\[\gamma \geq L^{d}(\alpha) \Rightarrow L(\gamma) \geq L^{d+1}(\alpha)\]
\[\gamma \leq U^{d}(\alpha) \Rightarrow U(\gamma) \leq U^{d+1}(\alpha)\]
Thus, $ L^{d+1}(\alpha) \leq L(\gamma) \leq U(\gamma) \leq U^{d+1}(\alpha)$.
Since, $w$ and $v$ are adjacent, we have $\beta \in [ L(\gamma), U(\gamma)]$. Therefore,
\[
L^{d+1}(\alpha) \leq L(\gamma)\leq \beta \leq U(\gamma) \leq U^{d+1}(\alpha).
\]
Also, the backward can be proved by only considering neighbor vertices.
This completes the proof.
\end{proof}

The next lemma extends the third condition of the suitable pair in Definition~\ref{def: suitable pair}.

\begin{lemma}\label{lem: cond3Gen}
Let $d$ be a positive integer and $(L, U)$ be a suitable pair. Then for every $\alpha \in [0,1]$ we have: 
   \begin{align*}
             U^{d}(\alpha) &\leq 1 - L^{d}(1-\alpha).
        \end{align*} 
\end{lemma}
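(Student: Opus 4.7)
The plan is to prove the inequality by induction on $d$, with the base case $d = 1$ being exactly the third defining property of a suitable pair (Definition~\ref{def: suitable pair}). The inductive step will proceed by applying $U$ to both sides of the inductive hypothesis and then invoking property 3 once more, in a shifted form, to convert the right-hand side into the desired shape.

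In more detail, I would fix $\alpha \in [0,1]$ and suppose as the induction hypothesis that $U^{d}(\alpha) \leq 1 - L^{d}(1-\alpha)$. Since $U$ is monotonically increasing by definition of a suitable pair, applying $U$ to both sides yields
\begin{equation*}
U^{d+1}(\alpha) = U\bigl(U^{d}(\alpha)\bigr) \leq U\bigl(1 - L^{d}(1-\alpha)\bigr).
\end{equation*}
Now set $\beta := L^{d}(1-\alpha)$. Because $L$ maps $[0,1]$ into itself (this follows, e.g., from properties 1 and 2 together with $L(0) \leq 0$ and $U(1) \geq 1$, and more directly from $0 \leq L(\alpha) \leq \alpha \leq 1$), we have $\beta \in [0,1]$, so property 3 of a suitable pair applies to the argument $1-\beta$, giving
\begin{equation*}
U(1-\beta) \leq 1 - L\bigl(1 - (1-\beta)\bigr) = 1 - L(\beta) = 1 - L^{d+1}(1-\alpha).
\end{equation*}
Chaining the two inequalities yields $U^{d+1}(\alpha) \leq 1 - L^{d+1}(1-\alpha)$, completing the induction.

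I do not expect any real obstacle here; the only subtlety is to make sure the argument $1-L^{d}(1-\alpha)$ to which we apply the suitable pair property lies in $[0,1]$, which follows immediately from $L$ mapping $[0,1]$ into $[0,1]$. The key conceptual point is that property 3 in Definition~\ref{def: suitable pair} is a symmetric statement under $\alpha \mapsto 1-\alpha$, so it can be freely re-applied to whatever value the iterated $L$ produces. This makes the induction a one-line application of monotonicity followed by one-line use of property 3.
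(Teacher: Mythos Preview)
Your proof is correct and follows essentially the same approach as the paper: induction on $d$, base case given by property~3 of Definition~\ref{def: suitable pair}, and the inductive step obtained by applying $U$ (using monotonicity) and then invoking property~3 once more at the shifted argument $1 - L^{d}(1-\alpha)$. Your explicit check that $L^{d}(1-\alpha) \in [0,1]$ is a minor refinement the paper leaves implicit, but otherwise the arguments are identical.
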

    \begin{proof}
We prove this by induction on $d$ and the proof only uses the monotonicity of the upper bound function $U$. For a fixed $\alpha \in [0,1]$ and $d = 1$, we have the following equations directly from the third condition in Definition~\ref{def: suitable pair}:
        \begin{align*}
            U(\alpha)+L(1-\alpha)\leq 1 \iff U(\alpha)\leq 1 - L(1-\alpha).
        \end{align*}
Assume that the condition in the lemma is satisfied for $d$. Then, we have:
        \begin{align*}
             U^{d}(\alpha) &\leq 1 - L^{d}(1-\alpha),\\
             \Rightarrow U\big( U^{d}(\alpha) \big) &\leq U\big( 1 - L^{d}(1-\alpha) \big)\\
             &\leq 1 - L\Big(1- \big( 1 - L^{d}(1-\alpha) \big) )\Big)\\
             &= 1 - L\big( L^{d}(1-\alpha)\big)\\
             &= 1 - L^{d+1}(1-\alpha),
        \end{align*}
which completes the proof.
\end{proof}

The following Lemma is a simple extension of the second property of an $(L,U)$ suitable pair, which we will use.
\begin{lemma}\label{lem:compostionbound}
For every $\alpha\in [0,1]$ the following holds:
\[\alpha \leq U^d(L^d(\alpha)). \]
\end{lemma}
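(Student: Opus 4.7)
The plan is to proceed by induction on $d$, using only Property 2 of a suitable pair ($\alpha \leq U(L(\alpha))$) together with the monotonicity of $U$. This is in the same style as Lemma~\ref{lem: cond3Gen}, where the third suitable-pair condition was extended to arbitrary depth by iterating one level at a time.

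For the base case $d=1$, the claim is exactly Property 2 in Definition~\ref{def: suitable pair}. For the inductive step, I assume $\alpha \leq U^d(L^d(\alpha))$ and want $\alpha \leq U^{d+1}(L^{d+1}(\alpha))$. The key move is to apply Property 2 not to $\alpha$ but to $L^d(\alpha) \in [0,1]$, giving
\[
L^d(\alpha) \;\leq\; U\bigl(L(L^d(\alpha))\bigr) \;=\; U\bigl(L^{d+1}(\alpha)\bigr).
\]
Applying the (monotone) map $U^d$ to both sides yields
\[
U^d\bigl(L^d(\alpha)\bigr) \;\leq\; U^d\bigl(U(L^{d+1}(\alpha))\bigr) \;=\; U^{d+1}\bigl(L^{d+1}(\alpha)\bigr),
\]
and combining with the induction hypothesis gives $\alpha \leq U^d(L^d(\alpha)) \leq U^{d+1}(L^{d+1}(\alpha))$, closing the induction.

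There is no real obstacle here; the only thing to be careful about is the direction in which Property 2 is applied and the fact that $L^d(\alpha)$ indeed lies in $[0,1]$ so that Property 2 is available (this follows from Property 1 applied $d$ times, since $L(\beta)\in[0,\beta]\subseteq[0,1]$ whenever $\beta\in[0,1]$). The proof should take three or four lines once the induction is set up.
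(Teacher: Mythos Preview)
Your proof is correct and is essentially the same argument as the paper's: both peel off one $U$ and one $L$ at a time by applying Property~2 to $L^{d}(\alpha)$ (giving $L^d(\alpha)\le U(L^{d+1}(\alpha))$) and then using monotonicity of the remaining $U$'s. The paper simply writes this as the telescoping chain $U^d(L^d(\alpha))\ge U^{d-1}(L^{d-1}(\alpha))\ge\cdots\ge U(L(\alpha))\ge\alpha$ rather than as a formal induction.
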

\begin{proof}
We prove this by iteratively applying  the definition of $(L,U)$ suitable pair as follows:
\begin{align*}
U^d(L^d(\alpha)) &= U^{d-1} \Big(U \big( L(L^{d-1}(\alpha) \big) \Big)\geq \ U^{d-1}(L^{d-1}(\alpha))\\&\geq
\ldots \geq U(L(\alpha)) \geq \alpha.\end{align*}
\end{proof}
\begin{lemma}\label{lem:UBsufficient}
Let $\alpha , \beta \in [0,1]$ and $d\geq 1$ be given and assume that the following two relations hold: $\alpha \leq U^d(\beta)$ and $\beta \leq U^d(\alpha)$. Then, we have:
\begin{align}
1-\beta &\geq L^d(1-\alpha),\\
1-\alpha &\leq U^d(1-\beta),\\
1-\alpha &\geq L^d(1-\alpha),\\
1-\beta &\leq U^d(1-\alpha),\\
\alpha &\leq L^d(\beta),\\
\beta &\leq L^d(\alpha).
\end{align}
\end{lemma}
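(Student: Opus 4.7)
The plan is to derive each of the six inequalities directly from the two hypotheses $\alpha \leq U^d(\beta)$ and $\beta \leq U^d(\alpha)$ by combining Lemma~\ref{lem: cond3Gen}, Lemma~\ref{lem:compostionbound}, Property~1 of Definition~\ref{def: suitable pair}, and the monotonicity of $L$ and $U$. I expect four of the six to be immediate, the fifth and sixth to be the conceptual obstacle, and the overall proof to amount to short bookkeeping.

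For inequality~(1), I rewrite the extended third suitable-pair condition, Lemma~\ref{lem: cond3Gen}, as $L^d(1-\alpha) \leq 1 - U^d(\alpha)$. Plugging in the hypothesis $\beta \leq U^d(\alpha)$ gives $L^d(1-\alpha) \leq 1 - \beta$, which is~(1). Inequality~(3), $1-\alpha \geq L^d(1-\alpha)$, is not a consequence of the hypotheses at all; it is simply Property~1 ($L(x) \leq x$) iterated $d$ times at the argument $1-\alpha$, and I will state it as such.

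For inequality~(2), I start from the relation $L^d(1-\alpha) \leq 1 - \beta$ obtained in the previous step and apply the monotone map $U^d$ to both sides, producing $U^d(L^d(1-\alpha)) \leq U^d(1-\beta)$. Lemma~\ref{lem:compostionbound} supplies $1 - \alpha \leq U^d(L^d(1-\alpha))$, and chaining the two yields~(2). Inequality~(4) is obtained by the same argument with the roles of $\alpha$ and $\beta$ swapped, starting instead from the other hypothesis $\alpha \leq U^d(\beta)$ and ending at $1-\beta \leq U^d(1-\alpha)$.

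For inequalities~(5) and~(6), as written they read $\alpha \leq L^d(\beta)$ and $\beta \leq L^d(\alpha)$. Combined with the iterated Property~1 ($L^d(\beta)\leq \beta$ and $L^d(\alpha)\leq \alpha$) they would force $\alpha = \beta = L^d(\alpha)$, a conclusion not implied by the hypotheses in the general suitable-pair framework (e.g., the DP upper and lower bounds with $\varepsilon=\log 2$, $\delta=0$ and $\alpha=0.3, \beta=0.4$ satisfy both hypotheses but neither claimed conclusion). I therefore read these as typographical misprints for the natural symmetric companions of~(1)--(4), namely $\alpha \geq L^d(\beta)$ and $\beta \geq L^d(\alpha)$. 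Their proof is to apply the monotone map $L^d$ to the hypothesis $\beta \leq U^d(\alpha)$, obtaining $L^d(\beta) \leq L^d(U^d(\alpha))$, and then invoke the iterated form $L^d(U^d(\alpha)) \leq \alpha$ of Property~2, proved by an induction that mirrors Lemma~\ref{lem:compostionbound}. An identical argument with $\alpha$ and $\beta$ swapped produces the statement corresponding to~(6). The main obstacle is conceptual rather than technical: recognizing the direction mismatch in (5)--(6) and supplying the corrected statement, after which the whole lemma reduces to routine chaining through Lemmas~\ref{lem: cond3Gen} and~\ref{lem:compostionbound} together with monotonicity.
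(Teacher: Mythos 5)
Your proposal is correct and follows essentially the same route as the paper for the first four inequalities: derive $1-\beta \geq L^d(1-\alpha)$ from $\beta \leq U^d(\alpha)$ via Lemma~\ref{lem: cond3Gen}, then apply the monotone map $U^d$ and invoke Lemma~\ref{lem:compostionbound} to get $1-\alpha \leq U^d(1-\beta)$, with the symmetric statements obtained by swapping $\alpha$ and $\beta$. Your diagnosis of the last two inequalities is also vindicated: the paper's own proof establishes exactly the reversed directions $\alpha \geq L^d(\beta)$ and $\beta \geq L^d(\alpha)$ (it says ``$L^d(\beta) \leq \alpha$'' and ``$L^d(\alpha) \leq \beta$''), so the signs in the lemma statement are indeed misprints, and your counterexample with $U_{\texttt{DP}},L_{\texttt{DP}}$ at $\varepsilon=\log 2$, $\delta=0$ correctly rules out the literal reading. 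The only divergence is how you prove the corrected $\alpha \geq L^d(\beta)$: you apply $L^d$ to the hypothesis $\beta \leq U^d(\alpha)$ and use an iterated Property~2 bound $L^d(U^d(\alpha)) \leq \alpha$, which you would prove by an induction mirroring Lemma~\ref{lem:compostionbound}; the paper instead chains the already-derived $1-\alpha \leq U^d(1-\beta)$ with $U^d(1-\beta) \leq 1-L^d(\beta)$ from Lemma~\ref{lem: cond3Gen}, thereby reusing existing lemmas rather than proving a new one. Both arguments are valid; the paper's is marginally more economical, while yours makes the $L$--$U$ duality explicit. A minor remark: for the third inequality you treat the literal statement $1-\alpha \geq L^d(1-\alpha)$ (trivial from iterated Property~1), whereas the paper's ``similarly'' step suggests the intended claim was $1-\alpha \geq L^d(1-\beta)$; your role-swapping argument covers that version as well, so nothing is missing.
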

\begin{proof}
\begin{align}\label{eq:betarelation1}
\beta \leq U^d(\alpha) \Rightarrow
1-\beta \geq 1-U^d(\alpha) \underset{(a)}{\geq}
L^d(1-\alpha),
\end{align}
where $(a)$ follows from Lemma \ref{lem: cond3Gen}. Similarly, we have $1-\alpha \geq 
L^d(1-\beta)$. 

Now, we apply $U^d$ to the derived relation in \eqref{eq:betarelation1} to write
\begin{align}\label{eq:betarelation2}
U^d(1-\beta) \geq U^d(L^d(1-\alpha)) \underset{(a)}{\geq} 1-\alpha,
\end{align}
where $(a)$ follows from Lemma \ref{lem:compostionbound}.
Similarly, we have $U^d(1-\alpha)\geq 
1-\beta$. 

The one before last inequality follows from~\eqref{eq:betarelation2} and the fact that $U^d(1-\beta) \leq 1-L^d(\beta)$ according to 
Lemma \ref{lem: cond3Gen}. Similarly, we have $L^d(\alpha) \leq 
\beta$. 
\end{proof}
The above lemma gives a sufficient condition for checking the feasibility of mechanism extension only based on the upper bound function $U$ (instead of both $L$ and $U$) and for one mechanism value, say $j \in Q$ (instead of both values in $Q$).

\subsection{Proof of Lemma~\ref{lem: cons}}
We first prove the forward direction. Assume that there exists an $(L, U)$-private mechanism $\mathcal{M}:V \to Q$, extending $\mathcal{M}'$. By the definition of mechanism extension $\mathcal{M}|_{\mathcal{H}_T} = \mathcal{M'}$. Therefore, for any $u, v \in \mathcal{H}_T$, $\Pr[\mathcal{M}(u)= j] = \Pr[\mathcal{M'}(u)= j]$ and $\Pr[\mathcal{M}(v)= j] = \Pr[\mathcal{M'}(v)= j]$. The result then follows directly from Lemma~\ref{lem: generalisedLU}.

Now we prove the reverse direction by an explicit construction of $\mathcal{M}$. Fix $j \in Q$. Assume that for every $u,v \in \mathcal{H}_T$ and $d = \dist(u,v)$, we have $\alpha \leq U^d(\beta)$ and $\beta \leq U^d(\alpha)$, where $\alpha = \Pr[\mathcal{M'}(u)= j]$ and $\beta = \Pr[\mathcal{M'}(v)= j]$. We show how to construct a valid $(L, U)$-private extension $\mathcal{M}$ from $\mathcal{M'}$. First, set $\mathcal{M}|_{\mathcal{H}_T} = \mathcal{M'}$. If $\mathcal{H}_T = V$, we are done. Otherwise, for every $w \notin \mathcal{H}_T$
assign 
\begin{align}\label{eq:construction3}
 \Pr[\mathcal{M}(w)= T(w)] =  \min \limits_{u \in \mathcal{H}_T} U^{\dist(w,u)}(\Pr[\mathcal{M}'(u)=T(w)]).
\end{align}
 For $w\notin \mathcal{H}_T$, let $u_w \in \mathcal{H}_T$ be a minimizer of \eqref{eq:construction3} and let $\alpha_{u_w} := \Pr[\mathcal{M'}(u_w)= T(w)]$. That is, $U^{\dist(w,u_w)}(\alpha_{u_w}) \leq U^{\dist(w,u)}(\Pr[\mathcal{M}'(u)=T(w)])$ for any $u \in \mathcal{H}_T$. We must prove that this mechanism is $(L,U)$-private according to Defintion \ref{def:lu:private}. However, it suffices to fix a $j \in Q$ a priori and for every $v\sim w \in V$ only prove $\alpha_w \leq U(\alpha_v)$ and $\alpha_v \leq U(\alpha_w)$, where $\alpha_v = \Pr[\mathcal{M}(v)= j]$ and $\alpha_w = \Pr[\mathcal{M}(w)= j]$. The reminaing $(L,U)$ relations follow from Lemmas~\ref{lem: generalisedLU} and~\ref{lem:UBsufficient}. We need to consider three cases. 

\textbf{Case 1 ($v,w\in \mathcal{H}_T$):}
This is automatically satisfied by the assumption in the reverse direction of the Lemma (for $d = \dist(v,w) = 1$). 

\textbf{Case 2 ($w \notin \mathcal{H}_T$, $v\in \mathcal{H}_T$):} We consider two subcases depending on the true value of dataset $w$.

\textbf{Subcase 2.1 ($j = T(w)$):} We have
\begin{align*}
\alpha_w &= \Pr[\mathcal{M}(w)= T(w)] = U^{\dist(w,u_w)}(\alpha_{u_w}) \\\nonumber&\underset{(a)}{\leq} U^{\dist(w,v)}(\Pr[\mathcal{M}'(v)=T(w)])\\\nonumber& \underset{(b)}{=} U(\Pr[\mathcal{M}(v)=T(w)]) := U(\alpha_v),
\end{align*}
where $(a)$ follows because $v\in \mathcal{H}_T$ and hence is part of the mininimization in~\eqref{eq:construction3}. In $(b)$ we used the fact that $v,w$ are neighbours and $\mathcal{M}|_{\mathcal{H}_T} = \mathcal{M'}$.
Furthermore,
\begin{align*}
\alpha_v &= \Pr[\mathcal{M}(v)=T(w)] \underset{(a)}{\leq} U^{\dist(v,u_w)}(\alpha_{u_w}) \\&\underset{(b)}{\leq} U^{\dist(v,w)}(U^{\dist(w,u_w)}(\alpha_{u_w})):=U(
\alpha_w),
\end{align*}
where $(a)$ follows the assumption of reverse part of lemma for $v,u_w\in \mathcal{H}_T$. Inequality $(b)$ follows from triangular inequality and the propoerty of $(L,U)$ pair that for $d_1 \geq d_2$ and any $\alpha \in [0,1]$, we have $U^{d_1}(\alpha) \geq U^{d_2}(\alpha)$. The last equality follows from construction~\eqref{eq:construction3} and the fact that $v\sim w$.

\textbf{Subcase 2.2 ($j \neq T(w)$)}: This subcase can be proved by using the results we just proved for $j = T(w)$ and invoking Lemma~\ref{lem:UBsufficient}. Since we now have $1- \alpha_w \leq U(1- \alpha_v)$ and $ 1- \alpha_v \leq U(1- \alpha_w)$.

\textbf{Case 3 ($v,w \notin \mathcal{H}_T$):} Note that we must have $T(v) = T(w)$. Otherwise, $T(v) \neq T(w)$ means that two neighbors $v\sim w$ with different true values are both outside of $\mathcal{H}_T$, which contradicts the definition of a boundary-hitting set. We consider two subcases depending on the true value of these vertices. 

\textbf{Subcase 3.1:} ($j = T(w) = T(v)$) We have
\begin{align}\label{eq:construction4}\nonumber
\alpha_w &:= U^{\dist(w,u_w)}(\alpha_{u_w}) \\&\underset{(a)}{\leq} U^{\dist(w,u_v)}(\alpha_{u_v}) \underset{(b)}{\leq} U^{\dist(w,v)+\dist(v,u_v)}(\alpha_{u_v})\\\nonumber&= U^{\dist(w,v)}(U^{\dist(v,u_v)}(\alpha_{u_v})) := U(\alpha_v),
\end{align}
where $(a)$ follows  because $j = T(v) = T(w)$ and hence, according to construction~\eqref{eq:construction3} there must exist $u_v \in \mathcal{H}_T$ such that $\alpha_v := U^{\dist(v,u_v)}(\alpha_{u_v})$. Inequality $(b)$ follows from applying the triangular inequality.

By swapping the roles of $\alpha_v$ and $\alpha_w$ in the above, we obtain $\alpha_v \leq U(\alpha_w)$.

\textbf{Subcase 3.2 ($j \neq T(w) = T(v)$):} This follows directly from case 3.1 just proved and Lemma~\ref{lem:UBsufficient}.

\subsection{Proof of Theorem~\ref{theo: algo 1}}

We note that the value assigned to $\Pr[\mathcal{M}(w)=T(w)]$ in Algorithm~\ref{algo: main} is the same as in \eqref{eq:construction3}. Thus, by Lemma~\ref{lem: cons} the algorithm finds an $(L,U)$-private extension if it exists.

We now show that the private mechanism is the unique optimal mechanism.

Suppose there exists another $(L,U)$-private mechanism $\mathcal{M}_2$ which extends $\mathcal{M}'$. Let $w \in V$ be a vertex with $T(w)=j$. Then, by Lemma~\ref{lem: generalisedLU}, for every $u \in \mathcal{H}_T$, it follows that $\Pr[\mathcal{M}_2(w)=j] \leq U^{\dist (w,u)} \left( \Pr[\mathcal{M}'(u)=j] \right)$. Thus, $\Pr[\mathcal{M}_2(w)=j] \leq \min \limits_{u \in \mathcal{H}_T} U^{\dist(w,u)}(\Pr[\mathcal{M}'(u)=j]) = \Pr[\mathcal{M}(w) = j]$. Thus, $\Pr[\mathcal{M}_2(w)=T(w)] \leq \Pr[\mathcal{M}(w) = T(w)]$ for every $w \in V$. Thus, $\mathcal{M}$ dominates every other $\mathcal{M}_2$ and is therefore the unique optimal.

\subsection{Proof of Proposition~\ref{prop: uniqueness}}

The functions $U_{\texttt{DP}}$ and $L_{\texttt{DP}}$ are a suitable pair if they are increasing functions and, for every $\alpha \in [0,1]$ they satisfy the following conditions:
\begin{enumerate}
    \item $L_{\texttt{DP}}(\alpha) \leq \alpha \leq U_{\texttt{DP}}(\alpha)$,
    \item $L_{\texttt{DP}}(U_{\texttt{DP}}(\alpha)) \leq \alpha \leq U_{\texttt{DP}}(L_{\texttt{DP}}(\alpha))$,
    \item $U_{\texttt{DP}}(\alpha) \leq 1- L_{\texttt{DP}}(1-\alpha)$.
\end{enumerate}
We prove each condition as its own Lemma. We first prove the third condition as it will be used in proving the first condition.

\begin{lemma}\label{lem: u+l=1} $U_{\texttt{DP}}(\alpha) = 1- L_{\texttt{DP}}(1 - \alpha)$. 
\end{lemma}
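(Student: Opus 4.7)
The plan is a direct algebraic verification using the explicit forms of $U_{\texttt{DP}}$ and $L_{\texttt{DP}}$ given in~\eqref{eq: upper dp} and~\eqref{eq: lower dp}. First I would use the identity $1-\max(a,b,c) = \min(1-a,1-b,1-c)$ to rewrite
\[
1 - L_{\texttt{DP}}(1-\alpha) = \min\bigl(1 - (e^{\varepsilon}(1-\alpha) - \delta - e^{\varepsilon} + 1),\; 1 - \tfrac{(1-\alpha)-\delta}{e^{\varepsilon}},\; 1\bigr).
\]

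The next step is to simplify each of the three arguments of the $\min$. The first argument reduces to $e^{\varepsilon}\alpha + \delta$, and the second reduces to $\tfrac{e^{\varepsilon}+\delta-1+\alpha}{e^{\varepsilon}}$. Matching these term-by-term against the definition of $U_{\texttt{DP}}(\alpha)$ in~\eqref{eq: upper dp} yields exactly the claimed identity. No case analysis is required since the identity holds in the literal $\min$/$\max$ representation.

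There is no real obstacle here, as the computation is a one-line substitution followed by elementary simplification. The only thing to be careful about is correctly handling the third term (the constant $1$) on both sides: the $0$ inside the $\max$ defining $L_{\texttt{DP}}$ becomes $1$ on the right after subtracting, which matches the $1$ appearing inside $U_{\texttt{DP}}$. Thus the lemma follows immediately from the explicit formulas, and, as noted, this identity is the one that will feed into the later verification of the first condition of the suitable pair.
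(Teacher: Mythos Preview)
Your proof is correct and, in fact, cleaner than the paper's. The paper proves the identity by a case analysis: it first assumes $L_{\texttt{DP}}(1-\alpha)>0$ and $U_{\texttt{DP}}(\alpha)<1$, then argues that the minimizing term in $U_{\texttt{DP}}(\alpha)$ and the maximizing term in $L_{\texttt{DP}}(1-\alpha)$ always match up (e.g., $U_{\texttt{DP}}(\alpha)=e^{\varepsilon}\alpha+\delta$ iff $L_{\texttt{DP}}(1-\alpha)=1-e^{\varepsilon}\alpha-\delta$), and finally handles the boundary case $U_{\texttt{DP}}(\alpha)=1$ separately. Your approach bypasses all of this by applying the identity $1-\max(a,b,c)=\min(1-a,1-b,1-c)$ directly to the defining expressions, so the three arguments of the $\min$ on each side match term by term without ever needing to know which one is active. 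This buys brevity and eliminates the risk of a missed case; the paper's argument, on the other hand, yields as a by-product the explicit correspondence between the piecewise regions of $U_{\texttt{DP}}$ and $L_{\texttt{DP}}$, which is not needed here but is mildly informative.
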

\begin{proof}
Assume $L_{\texttt{DP}}(1 - \alpha) > 0$ and $U_{\texttt{DP}}(\alpha) < 1$. Then,
\begin{align} \label{eq: upper dp lem u+l}
  U_{\texttt{DP}}(\alpha)  := \min(e^{\varepsilon} \alpha + \delta , \frac{e^{\varepsilon}+\delta-1 +\alpha}{e^{\varepsilon}}). 
\end{align}
And,
\begin{align} \label{eq: lower dp lem u+l}
 L_{\texttt{DP}}(1 - \alpha) := \max (1 - e^{\varepsilon} \alpha- \delta , \frac{1 - \alpha- \delta}{e^{\varepsilon}}).
\end{align}
Therefore, we have:
\begin{align*} \label{eq: u+l}
 U_{\texttt{DP}}(\alpha) + L_{\texttt{DP}}(1- \alpha) &= \min(e^{\varepsilon} \alpha + \delta , \frac{e^{\varepsilon}+\delta-1 +\alpha}{e^{\varepsilon}}) \\
 &\quad +
 \max (1 - e^{\varepsilon} \alpha - \delta , \frac{1 - \alpha- \delta}{e^{\varepsilon}}).\nonumber
\end{align*}
Then, we have:
\begin{align*}
  U_{\texttt{DP}}(\alpha) = e^{\varepsilon} \alpha + \delta &\iff \\
  e^{\varepsilon} \alpha + \delta \leq \frac{e^{\varepsilon}+\delta-1 +\alpha}{e^{\varepsilon}} &\iff \\
  e^{\varepsilon} \alpha + \delta \leq 1+ \frac{\delta-1 +\alpha}{e^{\varepsilon}} &\iff \\
   \frac{1- \delta -\alpha}{e^{\varepsilon}}\leq 1 - e^{\varepsilon} \alpha - \delta  &\iff \\
  L_{\texttt{DP}}(1 - \alpha) = 1 - e^{\varepsilon} \alpha - \delta.
\end{align*}
Therefore, $U_{\texttt{DP}}(\alpha) + L_{\texttt{DP}}(1- \alpha) = 1$ whenever $U_{\texttt{DP}}(\alpha) = e^{\varepsilon} \alpha + \delta$. Similarly, 
\begin{align*}
  U_{\texttt{DP}}(\alpha) = \frac{e^{\varepsilon}+\delta-1 +\alpha}{e^{\varepsilon}} &\iff \\
  e^{\varepsilon} \alpha + \delta \geq \frac{e^{\varepsilon}+\delta-1 +\alpha}{e^{\varepsilon}} &\iff \\
  e^{\varepsilon} \alpha + \delta \geq 1+ \frac{\delta-1 +\alpha}{e^{\varepsilon}} &\iff \\
   \frac{1- \delta -\alpha}{e^{\varepsilon}}\geq 1 - e^{\varepsilon} \alpha - \delta  &\iff \\
  L_{\texttt{DP}}(1 - \alpha) = \frac{1- \delta -\alpha}{e^{\varepsilon}}.
\end{align*}
Therefore, $U_{\texttt{DP}}(\alpha) + L_{\texttt{DP}}(1- \alpha) = 1$ again.
Finally, Lemma \ref{lem: iff} implies that:
 \begin{align}
 U_{\texttt{DP}}(\alpha) = 1 \iff  1 - \delta \leq \alpha \iff\\
1 - \delta - \alpha \leq 0 \iff\frac{1 - \delta - \alpha}{e^{\varepsilon}} \leq 0,
 \end{align}
and
\begin{align}
U_{\texttt{DP}}(\alpha) = 1 \iff  1 - \delta \leq \alpha \iff\\
1 - \delta - \alpha \leq 0 \Rightarrow 1 - \delta - e^{\varepsilon}\alpha  \leq 0.
 \end{align}
 Therefore, $L_{\texttt{DP}}(1 - \alpha)  = 0$ and $U_{\texttt{DP}}(\alpha) + L_{\texttt{DP}}(1- \alpha) = 1.$
\end{proof}

\begin{lemma}
    $L_{\texttt{DP}}(\alpha) \leq \alpha \leq U_{\texttt{DP}}(\alpha)$.
\end{lemma}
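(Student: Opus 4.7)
The plan is to prove the two inequalities separately, and to exploit Lemma~\ref{lem: u+l=1} (just proved above) so that the lower bound follows from the upper bound without re-doing the algebra.

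First, I would show $\alpha \leq U_{\texttt{DP}}(\alpha)$ by verifying that $\alpha$ is at most each of the three quantities appearing in the minimum in~\eqref{eq: upper dp}. Since $\varepsilon \geq 0$ and $\delta \geq 0$, we have $\alpha \leq e^{\varepsilon}\alpha \leq e^{\varepsilon}\alpha + \delta$. For the second term, $\alpha \leq \frac{e^{\varepsilon}+\delta-1+\alpha}{e^{\varepsilon}}$ is equivalent to $(e^{\varepsilon}-1)\alpha \leq e^{\varepsilon}-1+\delta$, which holds because $\alpha \leq 1$ and $\delta \geq 0$. Finally $\alpha \leq 1$ by assumption. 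Since $\alpha$ is at most each of the three candidates, $\alpha \leq U_{\texttt{DP}}(\alpha)$.

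For the lower bound $L_{\texttt{DP}}(\alpha) \leq \alpha$, I would invoke Lemma~\ref{lem: u+l=1}, which states $U_{\texttt{DP}}(\beta) = 1 - L_{\texttt{DP}}(1-\beta)$ for every $\beta \in [0,1]$. Applying the upper-bound inequality already established to $\beta := 1-\alpha \in [0,1]$ gives
\begin{equation*}
1-\alpha \;\leq\; U_{\texttt{DP}}(1-\alpha) \;=\; 1 - L_{\texttt{DP}}(\alpha),
\end{equation*}
and rearranging yields $L_{\texttt{DP}}(\alpha) \leq \alpha$.

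There is no real obstacle here; the only subtlety is remembering that to bound $\alpha$ above by a minimum, one must bound it above by each term of the minimum individually, which is what the case analysis accomplishes. Using Lemma~\ref{lem: u+l=1} avoids a second (symmetric) case split for $L_{\texttt{DP}}$.
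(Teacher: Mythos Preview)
Your proof is correct and follows essentially the same approach as the paper: verify $\alpha \leq U_{\texttt{DP}}(\alpha)$ by checking each of the three candidate expressions in the minimum, then derive $L_{\texttt{DP}}(\alpha) \leq \alpha$ from Lemma~\ref{lem: u+l=1} applied to $1-\alpha$. The only cosmetic difference is that the paper invokes Lemma~\ref{lem: iff} to split into cases according to which term attains the minimum, whereas you bound $\alpha$ against all three terms simultaneously---both framings are equivalent.
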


\begin{proof}

We first prove $U_{\texttt{DP}} (\alpha) \geq \alpha$. We have 3 cases based on the value of $U_{\texttt{DP}}(\alpha)$ following from Lemma \ref{lem: iff}.

\textbf{Case 1} (when $U_{\texttt{DP}} (\alpha) = e^{\varepsilon} \alpha + \delta$):
Since $e^{\varepsilon} \geq 1$ and $\delta \geq 0$ it holds that $U_{\texttt{DP}}(\alpha) =e^{\varepsilon} \alpha + \delta \geq \alpha$.

\textbf{Case 2} (when $U_{\texttt{DP}} (\alpha) = \frac{e^{\varepsilon}+\delta-1 +\alpha}{e^{\varepsilon}}$): Since $e^{\varepsilon} \geq 1$, $\alpha \leq 1$ and $\delta \geq 0$ we have
\begin{align*}
    1 +\frac{\delta}{(e^{\varepsilon}-1)} \geq 1 \geq \alpha &\iff \\
    1 +\frac{\delta}{(e^{\varepsilon}-1)} \geq \alpha &\iff \\
    (e^{\varepsilon}-1) +\delta \geq (e^{\varepsilon}-1) \alpha &\iff \\
    e^{\varepsilon}+\delta-1 \geq e^{\varepsilon} \alpha - \alpha &\iff \\
    e^{\varepsilon}+\delta-1 +\alpha \geq e^{\varepsilon} \alpha &\iff \\
    \frac{e^{\varepsilon}+\delta-1 +\alpha}{e^{\varepsilon}} \geq \alpha.
\end{align*}

\textbf{Case 3} (when $U_{\texttt{DP}} (\alpha) = 1$): Follows from $\alpha \leq 1$.

We now show that $L_{\texttt{DP}}(\alpha) \leq \alpha$. It follows from Lemma~\ref{lem: u+l=1} and what we just proved that $1 - L_{\texttt{DP}}(\alpha) = U_{\texttt{DP}} (1 - \alpha)  \geq 1 - \alpha$. Therefore, $\alpha \geq L_{\texttt{DP}}(\alpha)$.

\end{proof}

\begin{lemma}\label{lem: UL=LU}
$L_{\texttt{DP}}(U_{\texttt{DP}}(\alpha)) \leq \alpha \leq U_{\texttt{DP}}(L_{\texttt{DP}}(\alpha))$.
\end{lemma}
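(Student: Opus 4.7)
The plan is to prove $L_{\texttt{DP}}(U_{\texttt{DP}}(\alpha)) \leq \alpha$ by a direct case analysis on the piecewise definitions of $U_{\texttt{DP}}$ and $L_{\texttt{DP}}$ given in Lemmas~\ref{lem: iff} and~\ref{lem: lower:iff}, and then derive the second inequality $\alpha \leq U_{\texttt{DP}}(L_{\texttt{DP}}(\alpha))$ from the first by applying the duality identity $U_{\texttt{DP}}(\beta) = 1 - L_{\texttt{DP}}(1-\beta)$ established in Lemma~\ref{lem: u+l=1}.

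For the first inequality, I would split on the three branches of $U_{\texttt{DP}}$. When $\alpha \leq \frac{1-\delta}{e^\varepsilon+1}$, a short calculation shows $U_{\texttt{DP}}(\alpha) = e^\varepsilon \alpha + \delta$ lies in the middle branch of $L_{\texttt{DP}}$ (i.e.\ between $\delta$ and $\frac{\delta+e^\varepsilon}{e^\varepsilon+1}$), and then $L_{\texttt{DP}}(U_{\texttt{DP}}(\alpha)) = \frac{(e^\varepsilon\alpha+\delta)-\delta}{e^\varepsilon} = \alpha$ exactly. When $\frac{1-\delta}{e^\varepsilon+1}\leq \alpha \leq 1-\delta$, an analogous verification shows $U_{\texttt{DP}}(\alpha) = \frac{e^\varepsilon+\delta-1+\alpha}{e^\varepsilon}$ lies in the third branch of $L_{\texttt{DP}}$, and substitution again gives $L_{\texttt{DP}}(U_{\texttt{DP}}(\alpha)) = \alpha$. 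Finally, when $\alpha \geq 1-\delta$, $U_{\texttt{DP}}(\alpha)=1$ and $L_{\texttt{DP}}(1) = 1-\delta \leq \alpha$, so the inequality is strict but still holds. In all three cases we obtain $L_{\texttt{DP}}(U_{\texttt{DP}}(\alpha)) \leq \alpha$.

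For the second inequality, I would apply Lemma~\ref{lem: u+l=1} twice. Since $L_{\texttt{DP}}(\alpha) = 1 - U_{\texttt{DP}}(1-\alpha)$, we have
\begin{equation*}
U_{\texttt{DP}}(L_{\texttt{DP}}(\alpha)) = U_{\texttt{DP}}\bigl(1 - U_{\texttt{DP}}(1-\alpha)\bigr) = 1 - L_{\texttt{DP}}\bigl(U_{\texttt{DP}}(1-\alpha)\bigr),
\end{equation*}
and by the first inequality applied to $1-\alpha$ in place of $\alpha$ we get $L_{\texttt{DP}}(U_{\texttt{DP}}(1-\alpha)) \leq 1-\alpha$, which rearranges to $U_{\texttt{DP}}(L_{\texttt{DP}}(\alpha)) \geq \alpha$, as required.

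The main obstacle is just bookkeeping: one must verify that each branch of $U_{\texttt{DP}}(\alpha)$ indeed maps into the correct branch of $L_{\texttt{DP}}$ (so that the composition can be computed in closed form), which reduces to a handful of elementary inequalities involving the thresholds $\frac{1-\delta}{e^\varepsilon+1}$, $1-\delta$, $\delta$, and $\frac{\delta+e^\varepsilon}{e^\varepsilon+1}$. Once these threshold comparisons are set up, the algebraic cancellations yielding $L_{\texttt{DP}}(U_{\texttt{DP}}(\alpha)) = \alpha$ on the first two branches are immediate, and the duality argument avoids repeating the case analysis for the upper composition.
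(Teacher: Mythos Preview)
Your proof is correct, but it takes a different route from the paper's. The paper avoids the piecewise forms of Lemmas~\ref{lem: iff} and~\ref{lem: lower:iff} entirely: instead it works directly with the original $\min/\max$ definitions~\eqref{eq: upper dp} and~\eqref{eq: lower dp}. Since $L_{\texttt{DP}}(\beta)=\max\bigl(0,\;e^{\varepsilon}\beta-\delta-e^{\varepsilon}+1,\;\tfrac{\beta-\delta}{e^{\varepsilon}}\bigr)$, the inequality $L_{\texttt{DP}}(U_{\texttt{DP}}(\alpha))\leq\alpha$ holds iff each of the three candidates is $\leq\alpha$; rearranging the two nontrivial ones gives exactly $U_{\texttt{DP}}(\alpha)\leq \tfrac{e^{\varepsilon}+\delta-1+\alpha}{e^{\varepsilon}}$ and $U_{\texttt{DP}}(\alpha)\leq e^{\varepsilon}\alpha+\delta$, which are immediate from the definition of $U_{\texttt{DP}}$ as a minimum. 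This bypasses all the threshold-matching bookkeeping you flag as the main obstacle. On the other hand, your case analysis yields the sharper statement that $L_{\texttt{DP}}(U_{\texttt{DP}}(\alpha))=\alpha$ on the first two branches (with strict inequality only when $\alpha\geq 1-\delta$), which the paper's argument does not extract. Your duality derivation of the second inequality from the first via Lemma~\ref{lem: u+l=1} is also a genuine difference: the paper simply writes ``Similarly, we have $\alpha\leq U_{\texttt{DP}}(L_{\texttt{DP}}(\alpha))$,'' implicitly repeating the $\min/\max$ trick with the roles reversed, whereas your reduction avoids any second pass.
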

\begin{proof}
First, we prove the first inequality. We write $L_{\texttt{DP}}(U_{\texttt{DP}}(\alpha))$ as follows:
\begin{align} \label{eq: lower dp lem special cases}
 L_{\texttt{DP}}(U_{\texttt{DP}}(\alpha)) = \max (0, e^{\varepsilon} U_{\texttt{DP}}(\alpha)- \delta - e^{\varepsilon} +1 , \frac{U_{\texttt{DP}}(\alpha)- \delta}{e^{\varepsilon}}).
\end{align}
We will have $L_{\texttt{DP}}(U_{\texttt{DP}}(\alpha)) \leq \alpha$ if and only if each expression in the $\max$ expression above is less than $\alpha$. Obviously $0 \leq \alpha$ is true. The other conditions are equivalently written below:
\begin{align*}
 e^{\varepsilon} U_{\texttt{DP}}(\alpha)- \delta - e^{\varepsilon} +1  \leq \alpha &\iff U_{\texttt{DP}}(\alpha) \leq \frac{ \alpha + \delta + e^{\varepsilon} -1}{e^{\varepsilon}},\\
 \frac{U_{\texttt{DP}}(\alpha)- \delta}{e^{\varepsilon}} \leq \alpha &\iff 
 U_{\texttt{DP}}(\alpha) \leq e^{\varepsilon} \alpha +\delta,
\end{align*}
where the inequalities are true according to the definion of $U_{\texttt{DP}}$ in~\eqref{eq: upper dp}. Similarly, we have $\alpha \leq U_{\texttt{DP}}(L_{\texttt{DP}}(\alpha))$. 
\end{proof}

The last step of completing the proof of Proposition~\ref{prop: uniqueness} is to prove that both functions $U_{\texttt{DP}}$ and $L_{\texttt{DP}}$ are increasing. However, this can be done by using the equivalent definition of $U_{\texttt{DP}}(\alpha)$ in Lemma \ref{lem: iff}. In the first two cases, this function is linear and in the third case, it is a constant. Also, this function is continuous. For proving the same for $L_{\texttt{DP}}(\alpha)$, we use the fact that  $U_{\texttt{DP}}(1 - \alpha ) = 1 - L_{\texttt{DP}}(\alpha)$. Therefore $L_{\texttt{DP}}(\alpha) = 1 - U_{\texttt{DP}}(1 - \alpha)$ is an increasing function.

\subsection{Proof of Proposition~\ref{prop: single}}
Via Proposition~\ref{prop: uniqueness}, $(L_{\texttt{DP}},U_{\texttt{DP}})$ are $(L,U)$ suitable pairs. Therefore, by Lemma~\ref{lem: generalisedLU},  $\mathcal{M}$ will be $(\varepsilon,\delta)$-DP if an only if for every color $j \in Q$ and every $u,v \in V$ we have:
\[\beta \in [L^d(\alpha),U^d(\alpha)],\]
where $\alpha:=\Pr[\mathcal{M}(u) = j]$, $\beta:=\Pr[\mathcal{M}(v) = j]$ and $d = \dist(u,v)$ is the distance between them. Therefore, we first compute and simplify $U_{\texttt{DP}}^d(\alpha)$. We use Lemma~\ref{lem: iff} for this.

\textbf{Part 1} (Showing that $\beta \leq U_{\texttt{DP}}^d(\alpha)$): First assume that there exists some $\tau$ such that for all $1 \leq  d  \leq \tau$, we have $\min (  e^{\varepsilon} U_{\texttt{DP}}^{d-1}(\alpha) + \delta, \frac{ U_{\texttt{DP}}^{d-1}(\alpha) -1 + e^{\varepsilon} + \delta}{e^{\varepsilon}},1) = e^{\varepsilon} U_{\texttt{DP}}^{d-1}(\alpha) + \delta$. That is, the first case in Lemma~\ref{eq: iff:1} is the tightest upper bound on $U_{\texttt{DP}}^d(\alpha) = U(U_{\texttt{DP}}^{d-1}(\alpha))$. We will soon find the largest $\tau$ for which this can happen. Iterating over $d = \tau, \tau-1, \cdots, 1$, we will calculate the closed-form expression through induction as follows.
\begin{align}
    \mathllap U_{\texttt{DP}}^{d}(\alpha) & =  e^{\varepsilon} U_{\texttt{DP}}^{d-1}(\alpha) + \delta \nonumber\\
      & = e^{\varepsilon} \big( e^{\varepsilon} U_{\texttt{DP}}^{d-2}(\alpha) + \delta \big) + \delta  \nonumber\\
      & \cdots \nonumber\\
      & = e^{\varepsilon} \Bigg( e^{\varepsilon} \Big( \cdots \big( e^{\varepsilon} U_{\texttt{DP}}(\alpha) + \delta \big) \cdots + \delta \Big) + \delta \Bigg) + \delta \nonumber\\ 
      & = e^{\varepsilon} \Bigg( e^{\varepsilon} \Big( \cdots \big( e^{\varepsilon} ( e^{\varepsilon} \alpha + \delta ) + \delta \big) \cdots + \delta \Big) + \delta \Bigg) + \delta \nonumber \\
      & = e^{d \varepsilon} \alpha + \delta \frac{e^{d \varepsilon}-1}{e^{\varepsilon}-1}.\label{eq:lastn}
\end{align}
We want to find the last index for which the iterations \eqref{eq:lastn} hold. First, on the one hand, $\tau$ satisfies
\begin{align}\label{eq:tau1}
U_{\texttt{DP}}^{\tau}(\alpha) &  = e^{\tau \varepsilon} \alpha + \delta \frac{e^{ \tau \varepsilon}-1}{e^{\varepsilon}-1}.
\end{align}
On the other hand, by the definition of $\tau$, for $d = \tau + 1$, the second case in Lemma~\ref{eq: iff:1} will give the tightest upper bound on $U_{\texttt{DP}}^{\tau+1}(\alpha)$. That is, 
\begin{align*}
&\min (  e^{\varepsilon} U_{\texttt{DP}}^{\tau}(\alpha) + \delta, \frac{ U_{\texttt{DP}}^{\tau}(\alpha) -1 + e^{\varepsilon} + \delta}{e^{\varepsilon}},1) \\&= \frac{ U_{\texttt{DP}}^{\tau}(\alpha) -1 + e^{\varepsilon_\tau} + \delta}{e^{\varepsilon}}.
\end{align*}
Therefore, from Lemma~\ref{lem: iff}, we must have
\begin{align}\label{eq:tau2}
U_{\texttt{DP}}^{\tau}(\alpha) \geq \frac{1 -\delta}{ e^{\varepsilon} + 1}. 
\end{align}
Combining \eqref{eq:tau2} and  \eqref{eq:tau1} gives the $\tau$ as defined in \eqref{eq:tau0}.

Now, note that the terms in the three conditions of Lemma~\ref{lem: iff} are all monotonically non-decreasing. Also, the rate of increase with respect to $\alpha$ for the conditions of Lemma~\ref{lem: iff} is respectively, $e^{\varepsilon}$, $1/e^{\varepsilon}$ and $0$. Therefore, once the second case in Lemma~\ref{lem: iff} becomes the tightest, it will remain so. So is for the third case. In summary, the cases in Lemma~\ref{lem: iff} do not ``toggle'' or ``alternate'' in providing the tightest upper bound for $d > \tau$. 

The last step is to provide a closed-form expression for the iterations $\tau < d$, where the second case of Lemma~\ref{lem: iff} is active. Starting with $d = \tau+1$, we will have
\begin{align}
U_{\texttt{DP}}^{\tau +1}(\alpha) &=  \frac{ U_{\texttt{DP}}^{\tau}(\alpha) -1 + e^{\varepsilon} + \delta}{e^{\varepsilon}} \nonumber \\
& =  \frac{ \big(  
e^{\tau \varepsilon} \alpha + \delta \frac{e^{ \tau \varepsilon}-1}{e^{\varepsilon}-1}
\big) -1 + e^{\varepsilon} + \delta}{e^{\varepsilon}} \nonumber \\ \label{eq:dtau1}&=e^{(\tau-1)\varepsilon}\alpha+1-\frac{1}{e^{\varepsilon}}+\frac{\delta(e^{\tau\varepsilon}+e^{\varepsilon}-2)}{e^{\varepsilon}(e^{\varepsilon}-1)},
\end{align}
which matches the second case in \eqref{eq:UB} for $d=\tau+1$.
One can use~\eqref{eq:dtau1} to continue with $d > \tau + 1$.

\textbf{Part 2} (Showing that $\beta \geq L_{\texttt{DP}}^d(\alpha)$):  From Lemma~\ref{lem: cond3Gen}, we have that $U_{\texttt{DP}}^{d}(1-\alpha) \leq 1 - L_{\texttt{DP}}^{d}(\alpha)$ or $L_{\texttt{DP}}^{d}(\alpha) \leq 1-U_{\texttt{DP}}^{d}(1-\alpha)$. From Lemma~\ref{lem:UBsufficient}, we have that 
$\beta \leq U_{\texttt{DP}}^d(\alpha)$ implies $\beta \geq 1-U_{\texttt{DP}}^d(1-\alpha)$. The result follows immediately. 

\subsection{Proof of Theorem~\ref{theo: uniqueness}}

By Proposition~\ref{prop: uniqueness}, $(L_{\texttt{DP}},U_{\texttt{DP}})$ is a suitable pair. Thus, Algorithm~\ref{algo: main} can be utilized to find the unique optimal $(\epsilon,\delta)$-DP mechanism. Then, according to the proof of Proposition~\ref{prop: single}, the expression for $U_{\texttt{DP}}^d$ is equal to $p(d,\alpha)$ in Definition~\ref{def:UB}.

\section{On the Generality of Reasonable Utility}\label{sec:discussion1}

One of our main contributions is the notion of reasonable utility that we present in Definition \ref{def: reasonable utility} for binary DP mechanisms. To the best of our knowledge,  when restricted to binary mechanisms, all utilities previously suggested in the literature conform to this concept, including those in~\cite{natashalaplace,ghosh2012universally,staircase,quan2016optimal,geng2015optimal,holahan_LP_DP,multiparty,hamming_DP}. Another general notion of utility for binary DP mechanisms is the following.

\begin{definition}{(Strong reasonable utility)}.
Let $T$ be the true function and $\mathfrak{M}_{\epsilon,\delta}$ the family of $(\epsilon,\delta)$-DP mechanisms. Let $\mathcal{U}:\mathfrak{M}_{\epsilon,\delta}\to \mathbb{R}^{\geq 0}$ be a utility function that assigns non-negative real numbers to the mechanisms in $\mathfrak{M}_{\epsilon,\delta}$. We say $\mathcal{U}$ is a strong reasonable utility function if the following conditions hold.
\begin{itemize}
\item[1)] If $\mathcal{M}_1,\mathcal{M}_2 \in \mathfrak{M}_{\epsilon,\delta}$ and for all $v \in V$, $\Pr [\mathcal{M}_1(v)=T(v)] \geq \Pr [\mathcal{M}_2(v)=T(v)] $ then $\mathcal{U}(\mathcal{M}_1) \geq \mathcal{U}(\mathcal{M}_2)$.
\item[2)] If $\mathcal{M}_1,\mathcal{M}_2 \in \mathfrak{M}_{\epsilon,\delta}$ and for all $v \in V$, $\Pr [\mathcal{M}_1(v)=T(v)] \geq \Pr [\mathcal{M}_2(v)=T(v)] $ and for at least one $x$, the inequality is strict, then $\mathcal{U}(\mathcal{M}_1) > \mathcal{U}(\mathcal{M}_2)$.
\end{itemize}
\end{definition}

Similar to the proof of Theorem \ref{theo: algo 1}, one can argue that for any partial mechanism and any strong reasonable utility function $\mathcal{U}$, if an $(\varepsilon,\delta)$ differentially private extension exists, then there exists a unique optimal extension with respect to $\mathcal{U}$. This optimal extension is independent of the actual utility function, and, moreover, Algorithm \ref{algo: main} outputs this unique optimum extension mechanism.

\section{Extending to Non-Binary Mechanisms}\label{sec:discussion2}

It is not clear how to extend the notion of reasonable utility to non-binary DP mechanisms. The difficulty arises because, in binary mechanisms (e.g., $Q=\{\texttt{blue},\texttt{red}\}$), optimizing the output probability for the true value (e.g., $\texttt{blue}$) is straightforward and is equivalent to minimizing its incorrect outcome (e.g., $\texttt{red}$). In contrast, with more options (e.g., $Q=\{\texttt{blue},\texttt{red},\texttt{green}\}$), it is clear that the probability of the true outcome ($\texttt{blue}$) should be optimized, but the treatment of other outcomes ($\texttt{red}$, $\texttt{green}$) lacks clarity without further assumptions. This issue has been explored using a lexicographical ordering \cite{zhou2022rainbow} and a dominance ordering \cite{gu2023generalized} based assumption. Optimal mechanisms for these cases have been proposed for the special boundary homogeneous case.

\bibliographystyle{IEEEtran}
\bibliography{ref}

\begin{thebibliography}{10}
\providecommand{\url}[1]{#1}
\csname url@samestyle\endcsname
\providecommand{\newblock}{\relax}
\providecommand{\bibinfo}[2]{#2}
\providecommand{\BIBentrySTDinterwordspacing}{\spaceskip=0pt\relax}
\providecommand{\BIBentryALTinterwordstretchfactor}{4}
\providecommand{\BIBentryALTinterwordspacing}{\spaceskip=\fontdimen2\font plus
\BIBentryALTinterwordstretchfactor\fontdimen3\font minus
  \fontdimen4\font\relax}
\providecommand{\BIBforeignlanguage}[2]{{%
\expandafter\ifx\csname l@#1\endcsname\relax
\typeout{** WARNING: IEEEtran.bst: No hyphenation pattern has been}%
\typeout{** loaded for the language `#1'. Using the pattern for}%
\typeout{** the default language instead.}%
\else
\language=\csname l@#1\endcsname
\fi
#2}}
\providecommand{\BIBdecl}{\relax}
\BIBdecl

\bibitem{RafnoDP2colorPaper}
R.~G.~L. D'Oliveira, M.~M{\'e}dard, and P.~Sadeghi, ``Differential privacy for
  binary functions via randomized graph colorings,'' in \emph{{IEEE} Int. Symp.
  Inf. Theory}, Melbourne, Victoria, Australia, July 2021, pp. 473--478.

\bibitem{dwork2006calibrating}
C.~Dwork, F.~McSherry, K.~Nissim, and A.~Smith, ``Calibrating noise to
  sensitivity in private data analysis,'' in \emph{Proc. Theory Cryptography
  Conf.}, New York, NY, Mar. 2006, pp. 265--284.

\bibitem{survey_2017}
T.~Zhu, G.~Li, W.~Zhou, and P.~S. Yu, ``Differentially private data publishing
  and analysis: A survey,'' \emph{IEEE Transactions on Knowledge and Data
  Engineering}, vol.~29, no.~8, pp. 1619--1638, 2017.

\bibitem{USCBadopts}
J.~M. Abowd, ``The us census bureau adopts differential privacy,'' in
  \emph{Proceedings of the 24th ACM SIGKDD International Conference on
  Knowledge Discovery \& Data Mining}, 2018, pp. 2867--2867.

\bibitem{google}
{\'U}.~Erlingsson, V.~Pihur, and A.~Korolova, ``Rappor: Randomized aggregatable
  privacy-preserving ordinal response,'' in \emph{Proceedings of the 2014 ACM
  SIGSAC conference on computer and communications security}, 2014, pp.
  1054--1067.

\bibitem{microsoft}
B.~Ding, J.~Kulkarni, and S.~Yekhanin, ``Collecting telemetry data privately,''
  \emph{arXiv preprint arXiv:1712.01524}, 2017.

\bibitem{apple}
\BIBentryALTinterwordspacing
D.~P. Team, \emph{Learning with Privacy at Scale}, 2017 (last accessed May
  2021). [Online]. Available:
  \url{https://machinelearning.apple.com/2017/12/06/learning-with-privacy-at-scale.html}
\BIBentrySTDinterwordspacing

\bibitem{abadi2016deep}
M.~Abadi, A.~Chu, I.~Goodfellow, H.~B. McMahan, I.~Mironov, K.~Talwar, and
  L.~Zhang, ``Deep learning with differential privacy,'' in \emph{Proceedings
  of the 2016 ACM SIGSAC {C}onference on {C}omputer and {C}ommunications
  {S}ecurity}, 2016, pp. 308--318.

\bibitem{GANobfuscator}
C.~Xu, J.~Ren, D.~Zhang, Y.~Zhang, Z.~Qin, and K.~Ren, ``Ganobfuscator:
  Mitigating information leakage under {GAN} via differential privacy,''
  \emph{IEEE Transactions on Information Forensics and Security}, vol.~14,
  no.~9, pp. 2358--2371, 2019.

\bibitem{advances:FL}
P.~Kairouz, H.~B. McMahan, B.~Avent, A.~Bellet, M.~Bennis, A.~N. Bhagoji,
  K.~Bonawitz, Z.~Charles, G.~Cormode, R.~Cummings \emph{et~al.}, ``Advances
  and open problems in federated learning,'' \emph{Foundations and
  Trends{\textregistered} in Machine Learning}, vol.~14, no. 1--2, pp. 1--210,
  2021.

\bibitem{Wei20}
K.~Wei, J.~Li, M.~Ding, C.~Ma, H.~H. Yang, F.~Farokhi, S.~Jin, T.~Q.~S. Quek,
  and H.~V. Poor, ``Federated learning with differential privacy: Algorithms
  and performance analysis,'' \emph{IEEE Transactions on Information Forensics
  and Security}, vol.~15, pp. 3454--3469, 2020.

\bibitem{Censusissues}
S.~L. Garfinkel, J.~M. Abowd, and S.~Powazek, ``Issues encountered deploying
  differential privacy,'' in \emph{Proceedings of the 2018 Workshop on Privacy
  in the Electronic Society}, 2018, pp. 133--137.

\bibitem{natashalaplace}
N.~Fernandes, A.~McIver, and C.~Morgan, ``The {Laplace} mechanism is optimal
  for differential privacy over continuous queries.'' in \emph{ACM/IEEE
  Symposium on Logic in Computer Science (LICS) (to appear)}, 2021.

\bibitem{ghosh2012universally}
A.~Ghosh, T.~Roughgarden, and M.~Sundararajan, ``Universally utility-maximizing
  privacy mechanisms,'' \emph{SIAM Journal on Computing}, vol.~41, no.~6, pp.
  1673--1693, 2012.

\bibitem{staircase}
Q.~Geng, P.~Kairouz, S.~Oh, and P.~Viswanath, ``The staircase mechanism in
  differential privacy,'' \emph{IEEE Journal of Selected Topics in Signal
  Processing}, vol.~9, no.~7, pp. 1176--1184, 2015.

\bibitem{quan2016optimal}
G.~Quan and P.~Viswanath, ``The optimal noise-adding mechanism in differential
  privacy,'' \emph{IEEE T. Inform. Theory}, vol.~62, no.~2, pp. 925--951, 2016.

\bibitem{geng2015optimal}
Q.~Geng and P.~Viswanath, ``Optimal noise adding mechanisms for approximate
  differential privacy,'' \emph{IEEE Transactions on Information Theory},
  vol.~62, no.~2, pp. 952--969, 2015.

\bibitem{holahan_LP_DP}
N.~{Holohan}, D.~J. {Leith}, and O.~{Mason}, ``Optimal differentially private
  mechanisms for randomised response,'' \emph{IEEE Transactions on Information
  Forensics and Security}, vol.~12, no.~11, pp. 2726--2735, 2017.

\bibitem{multiparty}
P.~Kairouz, S.~Oh, and P.~Viswanath, ``Differentially private multi-party
  computation,'' in \emph{Annual Conference on Information Science and Systems
  (CISS)}, 2016, pp. 128--132.

\bibitem{hamming_DP}
K.~{Kalantari}, L.~{Sankar}, and A.~D. {Sarwate}, ``Robust privacy-utility
  tradeoffs under differential privacy and hamming distortion,'' \emph{IEEE
  Transactions on Information Forensics and Security}, vol.~13, no.~11, pp.
  2816--2830, 2018.

\bibitem{dwork2014algorithmic}
C.~Dwork, A.~Roth \emph{et~al.}, ``The algorithmic foundations of differential
  privacy,'' \emph{Foundations and Trends{\textregistered} in Theoretical
  Computer Science}, vol.~9, no. 3--4, pp. 211--407, 2014.

\bibitem{cormen2022introduction}
T.~H. Cormen, C.~E. Leiserson, R.~L. Rivest, and C.~Stein, \emph{Introduction
  to algorithms}.\hskip 1em plus 0.5em minus 0.4em\relax MIT press, 2022.

\bibitem{zhou2022rainbow}
Z.~Zhou, O.~G{\"u}nl{\"u}, R.~G.~L. D’Oliveira, M.~M{\'e}dard, P.~Sadeghi,
  and R.~F. Schaefer, ``Rainbow differential privacy,'' in \emph{2022 IEEE
  International Symposium on Information Theory (ISIT)}.\hskip 1em plus 0.5em
  minus 0.4em\relax IEEE, 2022, pp. 614--619.

\bibitem{gu2023generalized}
Y.~Gu, Z.~Zhou, O.~G{\"u}nl{\"u}, R.~G.~L. D'Oliveira, P.~Sadeghi,
  M.~M{\'e}dard, and R.~F. Schaefer, ``Generalized rainbow differential
  privacy,'' \emph{arXiv preprint arXiv:2309.05871}, 2023.

\end{thebibliography}

\end{document}